\definecolor{darkred}  {rgb}{0.5,0,0}
\definecolor{darkblue} {rgb}{0,0,0.5}
\definecolor{darkgreen}{rgb}{0,0.5,0}
\newtheorem{theorem}{Theorem}
\newtheorem{proposition}{Proposition}
\newtheorem{lemma}{Lemma}
\newtheorem{corollary}{Corollary}
\theoremstyle{definition}
\newtheorem*{remark}{Remark}
\newtheorem*{example}{Example}
\DeclarePairedDelimiter{\norm}{\lVert}{\rVert}
\newcommand{\ket}[1]{|#1\rangle}
\newcommand{\op}[2]{|#1\rangle \langle #2|}
\DeclareMathOperator{\tr}{Tr}
\DeclareMathOperator{\conv}{conv}
\newcommand{\mc}[1]{\mathcal{#1}}
\newcommand{\CP}{\text{CP}}
\newcommand{\dep}{\mathcal{D}}
\DeclareMathOperator{\LOCC}{LOCC}
\DeclareMathOperator{\SEP}{SEP}
\newcommand{\no}[1]{$^{#1}$}
\title{\Large {\bf Everything You Always Wanted to Know About LOCC \protect\\ (But Were Afraid to Ask)}}
\author{
Eric Chitambar,\no{1,2}
Debbie Leung,\no{3}
Laura Man\v{c}inska,\no{3}
Maris Ozols,\no{3,4}
Andreas Winter\no{5,6,7}
\\[4mm]
$^{1}$
\textit{Department of Physics, Southern Illinois University,}\\
\textit{Carbondale, Illinois 62901, USA}\\[2mm]
$^{2}$
\textit{The Perimeter Institute for Theoretical Physics,}\\
\textit{Waterloo, Ontario N2L 2Y5, Canada}\\[2mm]
$^{3}$
\textit{Department of Combinatorics \& Optimization}\\
\textit{and Institute for Quantum Computing, University of Waterloo,}\\
\textit{Waterloo, Ontario N2L 3G1, Canada}\\[2mm]
$^{4}$
\textit{IBM TJ Watson Research Center,}\\
\textit{1101 Kitchawan Road, Yorktown Heights, NY 10598, USA}\\[2mm]
$^{5}$
\textit{ICREA -- Instituci\'{o} Catalana de Recerca i Estudis Avan\c{c}ats,}\\
\textit{Pg. Lluis Companys 23, 08010 Barcelona, Spain}\\
\textit{F\'{\i}sica Te\`{o}rica: Informaci\'{o} i Fenomens Qu\`{a}ntics,}\\
\textit{Universitat Aut\`{o}noma de Barcelona, ES-08193 Bellaterra (Barcelona), Spain}\\[2mm]
$^{6}$
\textit{Department of Mathematics, University of Bristol, Bristol BS8 1TW, U.K.}\\[2mm]
$^{7}$
\textit{Centre for Quantum Technologies, National University of Singapore,}\\
\textit{2 Science Drive 3, Singapore 117542, Singapore}
}
\date{}
\begin{document}
\maketitle

\begin{abstract}
In this paper we study the subset of generalized quantum measurements on finite dimensional systems known as local operations and classical communication (LOCC).  While LOCC emerges as the natural class of operations in many important quantum information tasks, its mathematical structure is complex and difficult to characterize.  Here we provide a precise description of LOCC and related operational classes in terms of quantum instruments.  Our formalism captures both finite round protocols as well as those that utilize an unbounded number of communication rounds.  While the set of LOCC is not topologically closed, we show that finite round LOCC constitutes a compact subset of quantum operations.  Additionally we show the existence of an open ball around the completely depolarizing map that consists entirely of LOCC implementable maps.  Finally, we demonstrate a two-qubit map whose action can be approached arbitrarily close using LOCC, but nevertheless cannot be implemented perfectly.
\end{abstract}

\newpage
\section{Introduction}

The ``distant lab'' paradigm plays a crucial role in both theoretical and experimental aspects of quantum information.  Here, a multipartite quantum system is distributed to various parties, and they are restricted to act locally on their respective subsystems by performing measurements and more general quantum operations.  However in order to enhance their measurement strategies, the parties are free to communicate any classical data, which includes the sharing of randomness and previous measurement results.  Quantum operations implemented in such a manner are known as LOCC (local operations with classical communication), and we can think of LOCC as a special subset of all physically realizable operations on the global system.  This restricted paradigm, motivated by current technological difficulties in communicating quantum data, serves as a tool to study not only quantum correlations and other nonlocal quantum effects, but also resource transformations such as channel capacities.  


Using LOCC operations to study resource transformation is best illustrated in quantum teleportation \cite{Bennett-1993a}.  Two parties, called Alice and Bob, are separated in distant labs.  Equipped with some pre-shared quantum states that characterize their entanglement resource, they are able to transmit quantum states from one location to another using LOCC; specifically, the exchange rate is one quantum bit (qubit) transmitted for one entangled bit (ebit) plus two classical bits (cbits) consumed.  Via teleportation then, LOCC operations become universal in the sense that Alice and Bob can implement \textit{any} physical evolution of their joint system given a sufficient supply of pre-shared entanglement.  Thus entanglement represents a fundamental resource in quantum information theory with LOCC being the class of operations that manipulates and consumes this resource \cite{Bennett-1996b, Nielsen-1999a, Bennett-1999c}.  Indeed, the class of non-entangled or separable quantum states are precisely those that can be generated exclusively by the action of LOCC on pure product states \cite{Werner-1989a}, and any sensible measure of entanglement must satisfy the crucial property that its expected value is non-increasing under LOCC \cite{Bennett-1996a, Vedral-1997a, Horodecki-2000a, Plenio-2007a}.

The intricate structure of LOCC was perhaps first realized over 20 years ago by Peres and Wootters who observed that when some classical random variable is encoded into an ensemble of bipartite product states, the accessible information may be appreciably reduced if the decoders - played by Alice and Bob - are restricted to LOCC operations \cite{Peres-1991a}.  (In fact, results in \cite{Peres-1991a} led to the discovery of teleportation.)  Several years later, Massar and Popescu analytically confirmed the spirit of this conjecture 
by considering pairs of particles that are polarized in the same randomly chosen direction \cite{Massar-1995a}.  It was shown that when Alice and Bob are limited to a finite number of classical communication exchanges, their LOCC ability to identify the polarization angle is strictly less than if they were allowed to make joint measurements on their shared states.  This line of research culminated into the phenomenon of quantum data hiding \cite{Terhal-2001a, DiVincenzo-2002a, Eggeling-2002a}.  Here, some classical data is encoded into a bipartite state such that Alice and Bob have arbitrarily small accessible information when restricted to LOCC, however, the data can be perfectly retrieved when the duo measures in the same lab.

The examples concerning accessible information mentioned above demonstrate that a gap between the LOCC and globally accessible information exists even in the absence of entanglement.  This finding suggests that nonlocality and entanglement are two distinct concepts, with the former being more general than the latter.  Bennett \textit{et al.} were able to sharpen this intuition by constructing a set of orthogonal bipartite pure product states that demonstrate ``nonlocality without entanglement'' in the sense that elements of the set could be perfectly distinguished by so-called separable operations (SEP) but not by LOCC \cite{Bennett-1999a}.  The significance that this result has on the structure of LOCC becomes most evident when considering the Choi-Jamio\l kowski isomorphism between quantum operations and positive operators \cite{Choi-1975a, Jamiolkowski-1972a}.  Separable operations are precisely the class of maps whose Choi matrices are separable, and consequently SEP inherits the relatively well-understood mathematical structure possessed by separable states \cite{Horodecki-1996a, Bruss-2002a}. The fact that SEP and LOCC are distinct classes means that LOCC lacks this nice mathematical characterization, and its structure is therefore much more subtle than SEP.

Thus despite having a fairly intuitive physical description, the class of LOCC is notoriously difficult to characterize mathematically \cite{Bennett-1999a, Donald-2002a}.  Like all quantum operations, an LOCC measurement can be represented by a trace-preserving completely positive map acting on the space of density operators (or by a ``quantum instrument'' as we describe below), and the difficulty is in describing the precise structure of these maps.  Part of the challenge stems from the way in which LOCC operations combine the globally shared classical information at one instance in time with the particular choice of local measurements at a later time.  The potentially unrestricted number of rounds of communication further complicates the analysis.  A thorough definition of finite-round LOCC has been presented in Ref.~\cite{Donald-2002a} thus formalizing the description given in Ref.~\cite{Bennett-1999a}.  

%
Recently, there has been a renewed wave of interest in LOCC alongside new discoveries concerning asymptotic resources in LOCC processing \cite{DeRinaldis-2004a, Chitambar-2011a, Kleinmann-2011a, Childs-2012a}.  It has now been shown that when an unbounded number of communication rounds are allowed, or when a particular task needs only to be accomplished with an arbitrarily small failure rate (but not perfectly), more can be accomplished than in the setting of finite rounds and perfect success rates.  Consequently, we can ask whether a task can be performed by LOCC, and failing that, whether or not it can be approximated by LOCC, and if so, whether a simple recursive procedure suffices.  To make these notions precise, a definition of LOCC and its topological closure is needed.  Here, we aim to extend the formalisms developed in \cite{Bennett-1999a, Donald-2002a, Kleinmann-2011a} so to facilitate an analysis of asymptotic resources and a characterization of the most general LOCC protocols.  Indeed, we hope that this work will provide a type of ``LOCC glossary'' for the research community.

The approach of this article is to describe LOCC in terms of quantum instruments.\footnote{An alternative characterization of LOCC in terms of physical tasks was described to us privately by the authors of Ref.~\cite{Kleinmann-2011a}.  Instead of considering how well an LOCC map approximates a target map (by a distance measure on maps), one can define a success measure for a particular task and study the achievable values via LOCC.  This is particularly useful when the task does not uniquely define a target map.  Here, we define LOCC in terms of quantum instruments, which admits a more precise mathematical description, and a further optimization over possible target maps can be added if one just wishes to focus on success rates.} This will enable us to cleanly introduce the class of infinite-round LOCC protocols as well as the more general class of LOCC-closure.  
%
The explicit definitions for these operational classes are provided in Sect.~\ref{Sect:definitions} as well as a discussion on the relationships among them.  In Sect.~\ref{Sect:topology}, we discuss some topological features possessed by the different LOCC classes.  The main results here involve (i) showing that the set of fixed outcome LOCC protocols possess a non-empty interior and (ii) providing an upper bound on the number of measurements needed per round to implement any finite round, finite outcome LOCC protocol.
In Sect.~\ref{Sect:bipartite} we construct a two-qubit separable instrument that can be approached arbitrarily close using LOCC but nevertheless cannot be implemented perfectly via LOCC.  This finding represents the first of its kind in the bipartite setting.  Finally in Sect.~\ref{Sect:conclusion} we close with a brief summary of results and discuss some additional open problems.  Technical proofs are reserved for the two appendices.

\section{How to define LOCC?}
\label{Sect:definitions}

\subsection{Quantum Instruments}

Throughout this paper, we consider a finite number of finite-dimensional quantum 
systems.  We denote the associated Hilbert space with $\mathcal{H}$ and refer to 
it as the underlying state space of the system.  Let $\mathcal{B}(\mathcal{H})$ be 
the set of bounded linear operators acting on $\mathcal{H}$ and 
$\mathcal{L}(\mathcal{B}(\mathcal{H}))$ the set of all bounded linear maps on 
$\mathcal{B}(\mathcal{H})$.  A (discrete) \textbf{quantum instrument} $\mathfrak{J}$ 
is a family of completely positive (CP) maps $(\mathcal{E}_j : j\in\Theta)$ 
with $\mathcal{E}_j\in\mathcal{L}(\mathcal{B}(\mathcal{H}))$ and $\Theta$ a finite 
or countably infinite index set, such that $\sum_j\mathcal{E}_j$ is 
trace-preserving \cite{Davies-1970a}.  When (w.l.o.g.) $\Theta = \{1,2,\ldots\}$, we write the instrument also
as an ordered list $\mathfrak{J} = (\mathcal{E}_1,\mathcal{E}_2,\ldots)$.  When it is applied to the state $\rho$, $\mathcal{E}_j(\rho)$ represents the 
(unnormalized) postmeasurement state associated with the outcome $j$, which 
occurs with probability ${\rm tr}(\mathcal{E}_j(\rho))$.  We denote the set of instruments with a given index set $\Theta$ as
$\CP[\Theta] \subset \mathcal{L}(\mathcal{B}(\mathcal{H}))^\Theta$.  For $\Theta = \{1,2,\ldots,m\}$ we abbreviate $\CP[\Theta] =: \CP[m]$.  If the index set is unimportant or implicitly clear from the context we often omit it. 

Note that the set of quantum instruments is convex, where addition of 
two instruments and scalar multiplication is defined componentwise. This
obviously requires that the instruments we combine are defined over the
same index set $\Theta$. However, an instrument $\mathfrak{J} \in \CP[\Theta]$
may naturally be viewed as an element in $\Theta' \supset \Theta$ by
padding $\mathfrak{J}$ with zeros, 
i.e.~$\mathcal{E}_j = 0$ for $j\in \Theta'\setminus\Theta$.  Also, observe that for finite $\Theta$ (finite $m$), the set of instruments 
over $\Theta$ is a closed subset of a finite-dimensional vector space.

Given index set $\Theta$, we define a \textbf{quantum-classical} (QC) \textbf{map} over $\Theta$ as a trace-preserving completely positive map (TCP) which sends $\mathcal{B}(\mathcal{H})\to\mathcal{B}(\mathcal{H})\otimes\mathcal{B}(\mathbb{C}^{|\Theta|})$ and is of the form \\$\rho\mapsto\sum_{j\in\Theta}\mathcal{E}_j(\rho)\otimes\op{j}{j}$, where the $\ket{j}$ constitute an orthonormal basis for $\mathbb{C}^{|\Theta|}$.  In this way, we see that for some underlying space $\mathcal{H}$, the set $\CP[\Theta]$ is in a one-to-one correspondence with the set of QC maps over $\Theta$.  For instrument $\mathfrak{J}=(\mathcal{E}_j : j\in\Theta)$, we denote its corresponding QC map by
$\mathcal{E[\mathfrak{J}]}(\cdot) = \sum_{j}\mathcal{E}_j(\cdot) \otimes\op{j}{j}$.


\begin{example}
Consider a POVM measurement $\mathcal{M}$ with Kraus operators $M_1,\dotsc, M_n$ that upon measuring $\rho$ and obtaining outcome $j$ gives postmeasurement state $M^{}_j \rho M_j^\dagger / \tr(M_j^\dagger M^{}_j \rho)$. The instrument corresponding to this POVM is $(\mathcal{E}_1,\dotsc,\mathcal{E}_n)$ where each $\mathcal{E}_j (\rho) = M^{}_j \rho M_j^\dagger $ has only one Kraus operator.
\end{example}

\begin{example}  
Consider a TCP map $\mathcal{N}$ with Kraus operators $M_1,\dotsc, M_n$, i.e., $\mathcal{N}(\rho)=\sum_{i\in[n]} M^{}_i \rho M_i^\dagger$ where $[n] := \{1, 2, \dotsc, n\}$. The instrument corresponding to this TCP map is $(\mathcal{N})$ with only one CP map (which is necessarily trace-preserving).
\end{example}

Let $\mathfrak{J} = (\mathcal{E}_j : j\in\Theta)$ and 
$\mathfrak{J}' = (\mathcal{F}_k : k\in\Theta')$ be quantum instruments. 
We say that $\mathfrak{J}'$ is a \textbf{coarse-graining} of $\mathfrak{J}$
if there exists a partition of the index set $\Theta$, given by
$\Theta = \bigsqcup_{k\in\Theta'} S_k$, such that $\mathcal{F}_k = \sum_{j\in S_k} \mathcal{E}_j$ for each $k\in\Theta'$. Equivalently,
and perhaps more intuitively, one can describe this by the action of a
\textbf{coarse-graining map} $f:\Theta \to \Theta'$, which is 
simply the function $f(k) = j$ for $k \in S_j$ (the sets $S_j$ may be infinite). In this picture we are 
using the coarse-graining map $f$ to post-process the classical information
from the instrument $\mathfrak{J}'$.
Physically, this action corresponds to the discarding of classical information if 
the coarse-graining is non-trivial.  The fully coarse-grained instrument of 
$\mathfrak{J}$ corresponds to the TCP map $\sum_{j}\mathcal{E}_j$, obtainable 
by tracing out the classical register of $\mathcal{E[\mathfrak{J}]}$.  We say 
that an instrument $(\mathcal{E}_j : j\in\Theta)$ is \textbf{fine-grained} if 
each of the $\mathcal{E}_j$ has action of the form $\rho\mapsto M_j \rho M_j^\dagger$ 
for some operator $M_j$.  In this way, the most general instrument can be 
implemented by performing a fine-grained instrument followed by coarse-graining.  



The set of instruments over an index set $\Theta$ carries a metric as follows.
For instruments $\mathfrak{J}=(\mathcal{E}_j : j\in\Theta)$ and 
$\tilde{\mathfrak{J}}=(\tilde{\mathcal{E}}_j : j\in\Theta)$, we use the distance 
measure induced by the diamond norm on the associated QC maps: 
\begin{equation}
\label{Eq:diamond}
  D_{\diamond}(\mathfrak{J},\tilde{\mathfrak{J}})
   := \bigl\| \mathcal{E[\mathfrak{J}]} - \mathcal{E[\tilde{\mathfrak{J}}]} \bigr\|_\diamond
    =\max_{0\leq \rho\leq \mathbb{I}}\sum_{j\in\Theta} \| (\mathcal{I}\otimes\mathcal{E}_j - \mathcal{I}\otimes \tilde{\mathcal{E}}_j)[\rho] \|_1
\end{equation}
where $\norm{\cdot}_\diamond$ is the diamond norm on superoperators \cite{Kitaev-2002a, Watrous-2005a}, $\norm{A}_p$ is the so-called Shatten $p$-norm of $A$, $\mathcal{I}$ is the identity in $\mathcal{L}(\mathcal{B}(\mathcal{H}))$, and $\rho\in\mathcal{B}(\mathcal{H}\otimes\mathcal{H})$.  That this series converges follows from the bound $\norm[\big]{\mathcal{E[\mathfrak{J}]} - \mathcal{E[\tilde{\mathfrak{J}}]} }_\diamond\leq 2$.  To see this observe that $\norm[\big]{\mathcal{E[\mathfrak{J}]} - \mathcal{E[\tilde{\mathfrak{J}}]} }_\diamond$ is equal to 
\[
  \max_{0\leq\rho\leq\mathbb{I}} \norm[\big]{(\mathcal{I}\otimes\mathcal{E[\mathfrak{J}]})[\rho] - (\mathcal{I}\otimes\mathcal{E[\tilde{\mathfrak{J}}]})[\rho]}_1 \leq 
  \max_{0\leq\rho\leq\mathbb{I}} \norm[\big]{(\mathcal{I}\otimes\mathcal{E[\mathfrak{J}]})[\rho]}_1 + \max_{0\leq\rho'\leq\mathbb{I}} \norm[\big]{(\mathcal{I}\otimes\mathcal{E[\mathfrak{J}]})[\rho']}_1.
\]  
As both $\mathcal{E}[\mathfrak{J}]$ and $\mathcal{E}[\mathfrak{J}']$ are trace-preserving, we have the desired uppper bound.
A sequence of instruments $\mathfrak{J}_\nu \in \CP[\Theta]$, $\nu = 1,2,\dotsc$, 
is said to converge to the instrument $\mathfrak{J} = (\mathcal{E}_j : j\in\Theta)$ if 
$\lim_{\nu \rightarrow \infty} D_{\diamond}(\mathfrak{J}_\nu,\mathfrak{J}) = 0$.  

If the index set $\Theta$ is finite, instrument convergence reduces to the pointwise condition: for all $j \in \Theta$, $\lim_{\nu \rightarrow \infty} \| \mathcal{E}_{\nu,j} - \tilde{\mathcal{E}}_j \|_\diamond = 0$. In fact, since $\CP[\Theta]$ is a subset of a 
finite-dimensional real vector space when $\Theta$ is finite, the topology is unique, independent
of the metric, and $\CP[\Theta]$ is complete in the sense that every
Cauchy sequence converges to an element
of $\CP[\Theta]$ (w.r.t. any norm, such as the one described above). All these statements are no longer true for infinite $\Theta$.



\subsection{LOCC Instruments}

For an $N$-partite quantum system, the underlying state space is $\mathcal{H}:=\mathcal{H}^{A_1}\otimes\mathcal{H}^{A_2}\otimes\dotsb\otimes\mathcal{H}^{A_N}$ with $\mathcal{H}^{A_K}$ being the reduced state space of party $K$.  An instrument $\mathfrak{J}^{(K)}=(\mathcal{F}_1,\mathcal{F}_2,\dotsc)$ is called \textbf{one-way local} with respect to party $K$ if each of its CP maps has the form $\mathcal{F}_j = \bigl(\bigotimes_{J \neq K}\mathcal{T}_j^{\smash{(J)}}\bigr) \otimes \mathcal{E}_j^{\,\smash{(K)}}$, where $\mathcal{E}^{\smash{(K)}}$ is a CP map on $\mathcal{B}(\mathcal{H^{\smash{A_K}}})$, and for each $J \neq K$, $\mathcal{T}_j^{\smash{(J)}}$ is some TCP map.  Operationally, this one-way local operation consists of party $K$ applying an instrument $(\mathcal{E}_1,\mathcal{E}_2,\dotsc)$, broadcasting the classical outcome $j$ to all other parties, and party $J$ applying TCP map $\mathcal{T}_j^{\smash{(J)}}$ after receiving this information.

We say that an instrument $\mathfrak{J}'$ is \textbf{LOCC linked} to $\mathfrak{J}=(\mathcal{A}_1,\mathcal{A}_2,\dotsc)$ if there exists a collection of one-way local instruments $\{\mathfrak{J}^{\,\,\smash{(K_j)}}_j =(\mathcal{B}_{1|j},\mathcal{B}_{2|j},\dotsc) : j=1,2,\dotsc \}$ such that $\mathfrak{J}'$ is a coarse-graining of the instrument with CP maps $\mathcal{B}_{j'|j} \circ \mathcal{A}_j$.  Operationally, after the completion of $\mathfrak{J}$, conditioned on the measurement outcome $j$, instrument $\mathfrak{J}^{\,\,\smash{(K_j)}}_j$ is applied followed by coarse-graining (see Fig.~\ref{CoarseGrain}).  Note that we allow the acting party in the conditional instrument $K_j$ to vary according to the previous outcome $j$.


\begin{figure}[t]
  \centering

\begin{tikzpicture}[
  > = latex',
  circ/.style = {circle, draw = black, fill = gray!40,
                 inner sep = 0mm, minimum size = 2mm}]

\def\w{1.6}; 
\def\h{1.1}; 


\node (J) at (0,2*\h) [circ] {};
\node at (3.8*\w,1.65*\h) {$\mathfrak{J} = (\mathcal{A}_1, \mathcal{A}_2)$};

\foreach \a in {1,2} {
  \pgfmathparse{2*\w*(\a-3/2)};
  \let\x = \pgfmathresult;
  \node (A\a) at (\x,\h) [circ] {};
  \draw [->] (J) to node[xshift = 15*(\a-1.5), yshift = 5]
        {$\mathcal{A}_{\a}$} (A\a);
  \foreach \b in {1,2} {
    \node (B\b\a) at ({\x+\w*(\b-3/2)},0) [circ] {};
    \draw [->] (A\a) to node[xshift = 20*(\b-1.5), yshift = 3]
          {$\mathcal{B}_{\b|\a}$} (B\b\a);
  }
}

\path[dashed,<->] (B11) edge[bend right] (B12);
\path[dashed,<->] (B21) edge[bend right] (B22);
\node at (0,-0.8*\h) {(coarse-grain over matching branches)};

\node[align = center] at (3.8*\w,0.2*\h) {Coarse-grained instrument:\\[6pt]%
$\begin{aligned}
 \mathfrak{J}' =
 (&\mathcal{B}_{1|1} \circ \mathcal{A}_1 +
   \mathcal{B}_{1|2} \circ \mathcal{A}_2 , \\
  &\mathcal{B}_{2|1} \circ \mathcal{A}_1 +
   \mathcal{B}_{2|2} \circ \mathcal{A}_2 )
\end{aligned}$%
};

\end{tikzpicture}

  \caption{The instrument $\mathfrak{J}'$ is LOCC linked to the instrument $\mathfrak{J}$.  Conditional instruments $(\mathcal{B}_{1|1},\mathcal{B}_{2|1})$ and $(\mathcal{B}_{1|2},\mathcal{B}_{2|2})$ can be composed with the two elements of $\mathfrak{J}$ so that after coarse-graining, the resulting instrument is $\mathfrak{J}'$.}
  \label{CoarseGrain}
\end{figure}
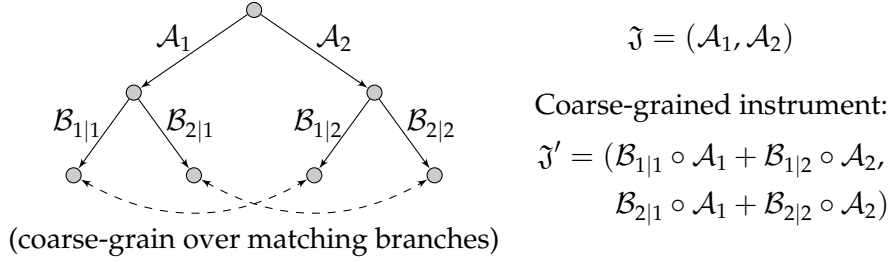

We now define the different classes of LOCC instruments.  For a quantum instrument $\mathfrak{J}\in\CP[\Theta]$, we say that:
\begin{itemize}
\item $\mathfrak{J}\in\text{LOCC}_1$ if $\mathfrak{J}$ is one-way local with 
respect to some party $K$, followed by a coarse-graining map.  
\item $\mathfrak{J}\in\text{LOCC}_{r}$ ($r \geq 2$) if it is LOCC linked to some $\mathfrak{J}\in\text{LOCC}_{r-1}$.
\item $\mathfrak{J}\in\text{LOCC}_{\mathbb{N}}$ if $\mathfrak{J}\in\text{LOCC}_r$ for some $r\in\mathbb{N}=\{ 1,2,\dotsc \}$.
\item $\mathfrak{J}\in\text{LOCC}$ if there exists a sequence $\{\mathfrak{J}_1,\mathfrak{J}_2,\dotsc\}$ in which (i) $\mathfrak{J}_\nu \in\text{LOCC}_{\mathbb{N}}$, (ii) $\mathfrak{J}_\nu$ is LOCC linked to $\mathfrak{J}_{\nu{-}1}$, and (iii) each $\mathfrak{J}_\nu$ has a coarse-graining $\mathfrak{J}_\nu'$ such that $\mathfrak{J}_\nu'$ converges to $\mathfrak{J}$.
\item $\mathfrak{J}\in\overline{\text{LOCC}_{\mathbb{N}}}$ if there exists a sequence $\mathfrak{J}_1,\mathfrak{J}_2,\dotsc$ in which (i) $\mathfrak{J}_j\in\text{LOCC}_{\mathbb{N}}$ and (ii) the sequence converges to $\mathfrak{J}$.
\end{itemize}

Operationally, $\text{LOCC}_{r}$ is the set of all instruments that can be implemented by some $r$-round LOCC protocol.  Here, one round of communication involves one party communicating to all the others, and the sequence of communicating parties can depend on the intermediate measurement outcomes.  The set of instruments that can be implemented by some finite round protocol is then $\text{LOCC}_{\mathbb{N}}$.  On the other hand, so-called infinite round protocols, or those having an unbounded number of non-trivial communication rounds, correspond to instruments in $\text{LOCC}\setminus\text{LOCC}_{\mathbb{N}}$.  The full set of LOCC then consists of both bounded-round protocols as well as the unbounded ones \cite{Bennett-1999a, Kleinmann-2011a, Chitambar-2011a}.  The set $\overline{\text{LOCC}_{\mathbb{N}}}$ is the topological closure of $\text{LOCC}_{\mathbb{N}}$, and from the chain of inclusions $\text{LOCC}_{\mathbb{N}} \subset \text{LOCC}\subset \overline{\text{LOCC}_{\mathbb{N}}}$, we obtain $\overline{\text{LOCC}_{\mathbb{N}}}=\overline{\text{LOCC}}$.



Both $\text{LOCC}$ and $\overline{\text{LOCC}_{\mathbb{N}}}$ consist of all instruments that can be approximated better and better with more LOCC rounds.  The two sets are distinguished by noting that for any instrument in $\text{LOCC}$, its approximation in finite rounds can be made tighter by just continuing for more rounds within a \textit{fixed} LOCC protocol; whereas for instruments in $\overline{\text{LOCC}_{\mathbb{N}}}\setminus \text{LOCC}$, different protocols will be needed for different degrees of approximation. 

Note that according to our definitions, every LOCC instrument is defined with respect to some fixed index set $\Theta$.  However, the instruments implemented during intermediate rounds of a protocol might range over different index sets.  The requirement is that the intermediate instruments can each be coarse-grained into $\Theta$ so to form a convergent sequence of instruments.  This coarse-graining need not correspond to an actual discarding of information.  Indeed, discarding the measurement record midway through the protocol will typically prohibit the parties from completing the final LOCC instrument since the choice of measurement in each round depends on the full measurement history.  On the other hand, often there will be an accumulation of classical data superflous to the task at hand, and the parties will physically perform some sort of coarse-graining (discarding of information), especially at the very end of the protocol.



To complete the picture, we also provide definitions for the related classes of separable (SEP) and positive partial transpose preserving (PPT) instruments.  A multipartite state $\rho^{A_1:A_2:\cdots:A_N}$ is called (fully) separable if it can be expressed as a convex combination of product states with respect to the partition $A_1\mathbin{:}A_2\mathbin{:}\cdots\mathbin{:}A_N$.  Likewise, $\rho^{A_1:A_2:\cdots:A_N}$ is said to have positive partial-transpose (PPT) if the operator obtained by taking a partial transpose with respect to any subset of parties is positive semi-definite.  Now, for the space $\mathcal{H} := \mathcal{H}^{A_1} \otimes \mathcal{H}^{A_2} \otimes \dotsb \otimes \mathcal{H}^{A_N}$, we introduce the auxiliary state space $\mathcal{H}' := \mathcal{H}^{A_1'} \otimes \mathcal{H}^{A_2'} \otimes \dotsb \otimes \mathcal{H}^{A_N'}$ and let $\mathcal{I}'$ denote the identity in $\mathcal{L}(\mathcal{B}(\mathcal{H}'))$.  Let $\mathfrak{J} = (\mathcal{E}_1, \mathcal{E}_2, \dotsc)$ be a quantum instrument acting on $\mathcal{B}(\mathcal{H})$ and consider the state
\begin{equation}
  (\mathcal{I}' \otimes \mathcal{E}_j) [\rho^{A'_1 A^{}_1 : A'_2 A^{}_2 : \cdots : A'_N A^{}_N}].
  \label{Eq:rho}
\end{equation}
We say that
\begin{itemize}
  \item $\mathfrak{J} = (\mathcal{E}_1, \mathcal{E}_2, \dotsc) \in \text{SEP}$ if each $\mathcal{E}_j$ is a separable map \cite{Vedral-1997a,Rains-1997a}, meaning that the state in Eq.~\eqref{Eq:rho} is separable whenever $\rho^{A'_1 A^{}_1 : A'_2 A^{}_2 : \cdots : A'_N A^{}_N}$ is separable;
  \item $\mathfrak{J} = (\mathcal{E}_1, \mathcal{E}_2, \dotsc) \in \text{PPT}$ if each $\mathcal{E}_j$ is a PPT map \cite{Rains-2001a,DiVincenzo-2002a}, meaning that the state in Eq.~\eqref{Eq:rho} is PPT whenever $\rho^{A'_1 A^{}_1 : A'_2 A^{}_2 : \cdots : A'_N A^{}_N}$ is PPT.
\end{itemize}
Operationally, these classes are more powerful than LOCC, yet they are still more 
restrictive than the most general quantum operations.  At the same time, they 
admit a simpler mathematical characterization than LOCC, and this can be used 
to derive many limitations on  LOCC, such as entanglement 
distillation \cite{Rains-1997a, Rains-2001a, Horodecki-1998a} and state 
discrimination \cite{DiVincenzo-2002a, Chefles-2004a}.

Every instrument belonging to $\text{LOCC}_r$, $\text{LOCC}$, $\overline{\text{LOCC}}$, SEP or PPT has an associated index set $\Theta$.  Thus for each $\Theta$, these operational classes naturally become subsets of CP$[\Theta]$, and we denote them as $\text{LOCC}_r[\Theta]$, $\text{LOCC}[\Theta]$, $\overline{\text{LOCC}}[\Theta]$, SEP$[\Theta]$ and PPT$[\Theta]$ respectively.  In the following sections we shall consider these different classes in more detail.


\subsection{Relationships Between the Classes}
\label{subsect:relationship}

For a fixed number of parties $N \geq 2$, and over a sufficiently
large (but universal) index set $\Theta$,
the different LOCC classes are related to each other as follows:
\begin{equation}
\text{LOCC}_1 \subsetneq \text{LOCC}_r \subsetneq \text{LOCC}_{r+1} \subsetneq \text{LOCC}_{\mathbb{N}} \subsetneq \text{LOCC} \subsetneq \overline{\text{LOCC}} \subsetneq \text{SEP} \subsetneq \text{PPT}
\end{equation}
for any $r \geq 2$.  We note that $\text{LOCC}_r \subsetneq \text{LOCC}_{r+k}$ for some $k\in\mathbb{N}$ implies $\text{LOCC}_r \subsetneq \text{LOCC}_{r+1}$.  To see this, suppose that $\text{LOCC}_r=\text{LOCC}_{r+1}$, and let $\mathfrak{J}$ be some instrument in $\text{LOCC}_{r+k}\setminus\text{LOCC}_r$.  Then there exists an implementation of $\mathfrak{J}$ consuming $r+k$ rounds with $\mathfrak{J}_{r+1}$ being the instrument performed during the first $r+1$ rounds of this particular implementation.  But since  $\text{LOCC}_r=\text{LOCC}_{r+1}$, we have $\mathfrak{J}_{r+1}\in\text{LOCC}_r$ and so $\mathfrak{J}\in\text{LOCC}_{r+k-1}$.  Here, we have considered $\Theta$ sufficiently large such that both $\mathfrak{J}$ and $\mathfrak{J}_{r+1}$ are instruments over the same index set (this can always be done by Theorem \ref{Thm:Finite}).  Repeating this argument for $k$ total times gives that $\mathfrak{J}\in\text{LOCC}_{r}$, which is a contradiction.

We now explain why all inclusions are proper.  The operational advantage of $\text{LOCC}_2$ over $\text{LOCC}_1$ is well-known, having been observed in entanglement distillation \cite{Bennett-1996a}, quantum cryptography \cite{Gottesman-2003a}, and state discrimination \cite{Cohen-2007a, Owari-2008a}.  On the other hand, only a few examples have been proven to demonstrate the separation between $\text{LOCC}_r$ and $\text{LOCC}_{r+1}$.  For $N=2$, Xin and Duan have constructed sets containing $O(n^2)$ pure states in two $n$-dimensional systems that require $O(n)$ rounds of LOCC to distinguish perfectly \cite{Xin-2008a}.  For $N \geq 3$, a stronger separation is shown for random distillation of bipartite entanglement from a three-qubit state.  Here the dimension is fixed, and two extra LOCC rounds can always increase the probability of success by a quantifiable amount; moreover, certain distillations only become possible by infinite-round LOCC, thus demonstrating $\text{LOCC}_{\mathbb{N}} \neq \text{LOCC}$ \cite{Chitambar-2011a}.  By studying the same random distillation problem, one can show that $\text{LOCC}\neq \overline{\text{LOCC}}$ \cite{Chitambar-2012a}.  Thus, there are instruments that require different protocols to achieve better and better approximations when more and more LOCC rounds are available, and neither $\text{LOCC}_{\mathbb{N}}$ nor $\text{LOCC}$ is closed.  For $N=2$, we will demonstrate in Sect.~\ref{Sect:bipartite} that $\text{LOCC}_{\mathbb{N}}$ is also not closed for two-qubit systems.  The difference between $\text{LOCC}_{\mathbb{N}}$ and $\text{SEP}$ has emerged in various problems such as state discrimination \cite{Bennett-1999b, Cohen-2007a, Duan-2007a} and entanglement transformations \cite{Chitambar-2008a}.  Proving that $\overline{\text{LOCC}} \neq \text{SEP}$ is more difficult, but it indeed has been demonstrated in Refs.~\cite{Bennett-1999a, Koashi-2007a, Childs-2012a} for the task of state discrimination, as well as Ref.~\cite{Cui-2011a, Chitambar-2012a} for random distillation.  In Sect.~\ref{Sect:bipartite} we will provide another example of this.  Finally, the strict inclusion between PPT and SEP follows from the existence of non-separable states that possess a positive partial transpose \cite{Horodecki-1997a}.

While $\text{LOCC}\subsetneq\text{SEP}$, it is possible to \textit{stochasitcally} perform any instrument from SEP by LOCC.  More precisely, let $\Theta=[m]$ and $\mathfrak{J}=(\mathcal{E}_1,\dotso,\mathcal{E}_m)\in \text{SEP}$ be an arbitrary separable instrument.  Then we say that $\mathfrak{J}$ can be performed by \textbf{Stochastic LOCC} (SLOCC) if there is some nonzero probability $p$ such that the instrument $\mathfrak{J}'=(p\mathcal{E}_1,\dotso p\mathcal{E}_m, (1-p)\mathcal{D})\in \text{CP}[m+1]$ is implementable by LOCC.  Here, $\dep \in\mc{L}(\mathcal{B}(\mathcal{H}))$ is the completely depolarizing TCP map which acts by $\dep(\rho)=\frac{\tr(\rho)}{d}\mathbb{I}_d$, where $\mathbb{I}_d$ is the identity operator in $\mathcal{B}(\mathcal{H})$ and $d$ is the dimension of $\mathcal{H}$.  The outcome $\mathcal{D}$ indicates a failure in implementing $\mathfrak{J}$ and it occurs with probability $1-p$.  It was shown in Ref. \cite{Dur-2000a} that every separable CP map has an SLOCC implementation.  Here we provide a lower bound on the success probability $p$.
\begin{lemma}
\label{Lem:SLOCCprob}
If $\mathfrak{J}=(\mathcal{E}_1,\dotso,\mathcal{E}_m)$ is some $N$-partite separable instrument with $d$ being the total dimension of $\mathcal{H}$.  Then $\mathfrak{J}$ can be implemented by {\upshape SLOCC} with a success probability at least $\frac{1}{md^2}$.
\end{lemma}
\begin{proof}
For all $i\in[m]$ let $\{A_{ij}^1\otimes \dotsc \otimes A_{ij}^N\}_{j\in[d^2]}$ be the set of operators in some Kraus representation of $\mc{E}_i$ with no more than $d^2$ elements. The existence of such a representation can be deduced using Carath\'{e}odory's theorem (see Theorem \ref{Thm:Finite} and its proof). For each Kraus operator $A_{ij}^1\otimes \dotsc \otimes A_{ij}^N$, we choose the individual matrices so that $\norm{A_{ij}^1}_{\infty} = \dotso = \norm{A_{ij}^p}_{\infty}$. This ensures that $\mathbb{I}_{d_k} \geq (A^k_{ij})^\dagger A^k_{ij}$, where $\mathbb{I}_{d_k}$ is the identity acting on subspace $\mathcal{H}^{A_k}$.  Therefore, 
\begin{equation}
  \mc{M}_{ij}^k := \big\{A^k_{ij}, \sqrt{\mathbb{I}_{d_k}-(A^k_{ij})^\dagger A^N_{ij}}\big\}
\end{equation}
is a valid local measurement for each party $k$.  The protocol now consists of the parties first collectively choosing a pair $(i,j)\in[m]\times[d^2]$ uniformly at random.  They then take turns performing their respective local measurements $\mc{M}_{ij}^k$ and broadcasting their result.  If all parties obtain the first outcome, their implementation is a success and they fully coarse-grain the classical data $(i,j)$ over the index $j$ (hence recovering the maps $\mathcal{E}_i$).  If at least one party obtains the second outcome, all the parties locally depolarize and this is a failure outcome.  Coarse-graining over all failure outcomes generates the ($N$-round) LOCC instrument $(\tfrac{1}{md^2}\mathcal{E}_1,\dotso,\tfrac{1}{md^2}\mathcal{E}_m,\tfrac{md^2-1}{md^2}\mathcal{D})$.
\end{proof}

\section{What is the shape of LOCC?}
\label{Sect:topology}

In this section we describe some topological properties of the set $\text{LOCC}$.  

\begin{proposition}
{\upshape LOCC} forms a convex subset of the set of all quantum instruments.
\end{proposition}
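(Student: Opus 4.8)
The plan is to establish two separate claims: that $\text{LOCC}$ is convex, and that it has empty interior. For convexity, I would argue operationally. Suppose $\mathfrak{J}_0, \mathfrak{J}_1 \in \text{LOCC}$ are both instruments over the same index set $\Theta$, and fix $p \in [0,1]$. The plan is to exhibit an LOCC protocol implementing $p\,\mathfrak{J}_0 + (1-p)\,\mathfrak{J}_1$. The natural construction is to have one designated party flip a private biased coin in the very first round, broadcasting the outcome to all other parties via one round of classical communication: with probability $p$ the parties then proceed to run the protocol for $\mathfrak{J}_0$, and with probability $1-p$ they run the protocol for $\mathfrak{J}_1$. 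Since the initial coin flip and broadcast is itself a one-way local operation (it is a trivial measurement tensored with identities, followed by classical communication), prepending it to either protocol keeps us inside $\text{LOCC}$; and because the two branches use the \emph{same} final index set $\Theta$, coarse-graining away the coin register yields exactly the convex combination. I would need to check that this shared-randomness construction respects the formal definition of $\text{LOCC}$ as a limit of $\text{LOCC}_{\mathbb{N}}$ instruments, which amounts to taking the convex combination of the two approximating sequences round by round; the diamond-norm distance in Eq.~\eqref{Eq:diamond} is jointly convex in the right way, so $D_\diamond$ of the combined approximant to the combined target is bounded by the $p$-weighted sum of the two individual errors and hence tends to zero.

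For the empty-interior claim, the plan is to exhibit, arbitrarily close to any given $\mathfrak{J} \in \text{LOCC}$, an instrument that is \emph{not} in $\text{LOCC}$ (indeed not even in $\text{SEP}$). The cleanest route is to use the ambient inclusion $\text{LOCC} \subset \text{SEP} \subset \text{PPT}$ together with the fact that there exist instruments outside $\text{SEP}$ arbitrarily close to any instrument: concretely, I would perturb $\mathfrak{J} = (\mathcal{E}_j : j \in \Theta)$ by mixing a tiny amount of an entangling map into one of its CP components. For instance, replace $\mathcal{E}_1$ by $(1-\epsilon)\mathcal{E}_1 + \epsilon\, \mathcal{G}$ where $\mathcal{G}$ is chosen so that the perturbed first component fails the separability criterion of Eq.~\eqref{Eq:rho} (its Choi matrix becomes non-separable, e.g.\ by adding a small entangled-state contribution), while simultaneously adjusting another component so that the sum $\sum_j \mathcal{E}_j$ remains trace-preserving. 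Since a genuinely entangling perturbation destroys separability of the Choi operator no matter how small $\epsilon$ is, the perturbed instrument lies outside $\text{SEP} \supseteq \text{LOCC}$, yet $D_\diamond$ between it and $\mathfrak{J}$ is $O(\epsilon)$. Thus no $\mathfrak{J} \in \text{LOCC}$ can have an open neighborhood contained in $\text{LOCC}$.

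The step I expect to require the most care is the empty-interior argument, specifically ensuring the perturbation is both trace-preserving and genuinely non-separable for every $\epsilon > 0$. The trace-preservation bookkeeping forces the perturbation to be spread across at least two CP components (or balanced against the identity channel), and I must confirm that this balancing does not accidentally restore separability. The convexity step, by contrast, I expect to be essentially routine once the shared-randomness protocol is written down, modulo the mild technical point of aligning index sets so that the two instruments being combined live in the same $\CP[\Theta]$ — which is permissible by padding with zeros as discussed earlier, and by Theorem~\ref{Thm:Finite} when finite-round approximants are involved.
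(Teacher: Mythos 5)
Your convexity argument is essentially the paper's: introduce shared randomness in the first round (a trivial local measurement whose outcome is broadcast), branch into the two protocols, and coarse-grain. Your additional care about combining the two approximating sequences and controlling $D_\diamond$ by the triangle inequality is correct and harmless.

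The empty-interior argument, however, has a genuine gap. You claim that replacing $\mathcal{E}_1$ by $(1-\epsilon)\mathcal{E}_1 + \epsilon\,\mathcal{G}$ with $\mathcal{G}$ entangling "destroys separability of the Choi operator no matter how small $\epsilon$ is." That is false in general: the separable cone has nonempty interior (there is a ball of separable states around the maximally mixed state), so if the Choi operator of $\mathcal{E}_1$ lies in the relative interior of that cone --- for instance if $\mathcal{E}_1$ is proportional to the completely depolarizing channel, or if $\mathfrak{J}$ is a single-component instrument given by a full-rank separable channel --- then $(1-\epsilon)\mathcal{E}_1 + \epsilon\,\mathcal{G}$ remains separable for all sufficiently small $\epsilon$ and any fixed CP map $\mathcal{G}$; and choosing $\mathcal{G}$ large enough to force non-separability makes the perturbation no longer small in $D_\diamond$. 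So your construction does not produce a non-SEP instrument arbitrarily close to every LOCC instrument, and the step you flagged as delicate (trace-preservation bookkeeping) is not the real obstruction. The paper's fix is precisely the padding device you mention but do not exploit: write $\mathfrak{J}_0=(0,\mathcal{F}_1,\mathcal{F}_2,\dotsc)$ and set $\mathfrak{J}_\lambda=\bigl(\lambda\mathcal{A}_0,(1-\lambda)\mathcal{F}_1,(1-\lambda)\mathcal{F}_2,\dotsc\bigr)$ with $\mathcal{A}_0$ a non-separable TCP map. The entangling map then occupies its \emph{own} component; a positive scalar multiple of a non-separable CP map is still non-separable, so $\mathfrak{J}_\lambda\notin\text{SEP}\supseteq\text{LOCC}$ for every $\lambda>0$, while $D_\diamond(\mathfrak{J}_\lambda,\mathfrak{J}_0)=O(\lambda)$ and trace preservation holds automatically because $\mathcal{A}_0$ and $\sum_j\mathcal{F}_j$ are both trace-preserving. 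Restructure your perturbation this way and the argument goes through.
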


\begin{proof}
Given two LOCC instruments $\mathfrak{J}_1$ and $\mathfrak{J}_2$, the convex combination $\lambda\mathfrak{J}_1+(1-\lambda)\mathfrak{J}_2$ can be implemented by LOCC by introducing some globally accessible randomness into the first round of the protocol that determines whether the parties perform $\mathfrak{J}_1$ or $\mathfrak{J}_2$.  
\end{proof}

Next we turn to the question of LOCC interior.  In what follows, let $d$ denote the total dimension of the $N$-party state space $\mathcal{H}=\mathcal{H}^{A_1}\otimes\dotso\otimes\mathcal{H}^{A_N}$.
\begin{theorem}
\label{Thm:LOCCinterior}
For a finite index set $\Theta=[m]$, {\upshape LOCC} has a non-empty interior.
\end{theorem}
\noindent To prove this, we will show the existence of a non-empty ball that consists entirely of LOCC instruments around the completely depolarizing instrument $\mathfrak{D}:=\left( \frac{1}{m} \mathcal{D}, \dotsc, \frac{1}{m} \mathcal{D} \right)$.\footnote{Via private communication, we have learned that Marco Piani has independently obtained a similar result for TCP maps.}
Our argument will rely on the following result of Gurvits and Barnum.
\begin{proposition}[\cite{Barnum03}]
\label{Prop:SepBall}
If $A\in \mc{B}(\mathcal{H})$ and $\norm{A}_2\leq R_N:=2^{1-N/2}$, then $\mathbb{I}_d+A$ is a separable operator with respect to the partition $A_1:A_2:\dotsb:A_N$.
\end{proposition}
\noindent While this is a statement about separable operators, we can easily translate it into a statement about separable maps.  Recall that for a CP map $\mathcal{E}\in\mathcal{L}(\mathcal{B}(\mathcal{H}^{A_1}\otimes\dotso\otimes\mathcal{H}^{A_N}))$, with $d_i$ being the dimension of $\mathcal{H}^{A_i}$, the $N$-partite Choi matrix is given by 
\[\Omega^{A_1'A^{}_1:\dotsb:A_N'A^{}_N}_\mathcal{E} := (\mathcal{I}^{A_1A_2\dotso A_N}\otimes\mathcal{E}^{A_1A_2\dotso A_N}) [\Phi^{A_1'A^{}_1} \otimes\dotso\otimes \Phi^{A_N'A^{}_N}],\]
where $\Phi^{A'_iA^{}_i}=\sum_{i,j=1}^{d_i}\op{ii}{jj}$ \cite{Choi-1975a}.  In short, the Choi matrix is obtained by distributing maximally entangled states across two copies of the original $N$-partite system and applying the map $\mathcal{E}$ to just half of it.  It is known that $\mathcal{E}$ is a separable CP map if and only if $\Omega^{A_1'A^{}_1:\dotsb:A_N'A^{}_N}$ is a separable operator with respect to the partition $A_1'A^{}_1:\dotsb:A_N'A^{}_N$ \cite{Cirac-2001a}.  Then as a first corollary to Proposition \ref{Prop:SepBall}, we have: 
\begin{corollary}
\label{Cor:SEPball}
Let $\mathfrak{J} = (\mc{E}_1, \dotsc, \mc{E}_m)\in \text{{\upshape CP}}[m]$. Then the instrument
\begin{equation}
\label{Eq:DepMix}
  (1-\eta) \mathfrak{D} + 
  \eta \mathfrak{I}
 \in \SEP,
\end{equation}
for all $\eta \leq R_{\SEP}:= R_N / (md^2 + R_N) = (md^2 2^{\frac{N}{2}-1} + 1)^{-1}$.
\end{corollary}
\begin{proof}
By the above discussion, we need to show that for each CP map in Eq.~\eqref{Eq:DepMix}, its corresponding Choi matrix is separable.  For the depolarizing map $\mathcal{D}$, we have $\Omega_\mathcal{D}=\frac{1}{d}\mathbb{I}_{d^2}$, and therefore, the Choi matrix for $(1-\eta)\frac{1}{m}\mathcal{D}+\eta\mathcal{E}_i$ is
$(1-\eta) \tfrac{1}{md}\mathbb{I}_{d^2} + \eta \Omega_{\mc{E}_i}$.  From Proposition~\ref{Prop:SepBall} we get that $\mathbb{I}_{d^2} + \frac{\eta m d}{1-\eta} \Omega_{\mc{E}_i}$ is separable whenever $\frac{\eta m d}{1-\eta} \norm{\Omega_{\mc{E}_i}}_2 \leq R_N$.  Note that $\norm{\Omega_{\mc{E}_i}}_2 \leq \norm{\Omega_{\mc{E}_i}}_1  = d$. Therefore, choosing $\eta \leq \frac{R_N}{md^2+R_N}$ yields the desired statement.
\end{proof}
\noindent Now we are in a position to take Proposition~\ref{Prop:SepBall} one step further and prove Theorem \ref{Thm:LOCCinterior}.  
\begin{corollary}
\label{Cor:LOCCBall}
Any instrument $\mathfrak{J} = (\mc{E}_1, \dotsc, \mc{E}_m)\in\text{{\upshape CP}}[m]$ such that 
\begin{equation}
  D_\diamond(\mathfrak{D},\mathfrak{J}) \leq R_{\LOCC}
\end{equation}
can be implemented by an $N$-round LOCC protocol, where $R_{\LOCC} := \frac{R_{\SEP}}{m^2d^4}
 =(m^3d^6  2^{\frac{N}{2}-1}+ m^2d^4)^{-1}$.
\end{corollary}   
\begin{proof}
We begin by decomposing each $\mathcal{E}_i$ in $\mathfrak{J}$ as $\mathcal{E}_i=(1-md\delta)\frac{1}{m}\mathcal{D}+md\delta\mathcal{F}_i$ where $\mathcal{F}_i=\frac{1}{m}\mathcal{D}-\frac{1}{md\delta}(\frac{1}{m}\mathcal{D}-\mathcal{E}_i)$.  By considering the Choi matrices $\Omega_{\mathcal{F}_i}$, we see that each $\mathcal{F}_i$ is CP when taking $\delta=\max_{i\in[m]}\norm{\frac{1}{md}\mathbb{I}_{d^2}-\Omega_{\mathcal{E}_i}}_\infty$.  From the definition of the diamond norm, it immediately follows that
\begin{align}
  D_\diamond(\mathfrak{D},\mathfrak{J}) &= 
   \max_{0\leq \rho\leq \mathbb{I}} \sum_{i\in[m]}\norm[\Big]{(\mathcal{I}\otimes\frac{1}{m}\mathcal{D} -
      \mathcal{I}\otimes \mathcal{E}_i)[\rho]}_1 
    \geq \sum_{i\in[m]} \frac{1}{d} 
      \norm[\Big]{\frac{1}{md}\mathbb{I}_{d^2} - \Omega_{\mathcal{E}_i}}_1
\notag\\
  & \geq \sum_{i\in[m]}\frac{1}{d} \norm[\Big]{\frac{1}{md}
    \mathbb{I}_{d^2}-\Omega_{\mathcal{E}_i}}_\infty
  \geq \frac{1}{d}\max_{i\in[m]} \norm[\Big]{\frac{1}{md}\mathbb{I}_{d^2} 
    - \Omega_{\mathcal{E}_i}}_\infty
  = \frac{\delta}{d}.  
\end{align}
Therefore, we have $\mathfrak{J}=(1-md\delta)\mathfrak{D}+md\delta\mathfrak{J}'$ where $\mathfrak{J}'=(\mathcal{F}_1,\dotso,\mathcal{F}_m)\in\text{CP}[m]$ and $md\delta \leq md^2D_\diamond(\mathfrak{D},\mathfrak{J})$.  Now Corollary \ref{Cor:SEPball} guarantees that $(1-\eta)\mathfrak{D}+\eta\mathfrak{J}'\in\text{SEP}$ when $\eta\leq R_{\SEP}$.  From Lemma \ref{Lem:SLOCCprob} and the convexity of LOCC, there exists an $N$-round LOCC protocol that implements this instrument with probability $p$ and depolarizes (fails) with probability $1-p$, for any $p\leq \frac{1}{md^2}$.  By coarse-graining the failure outcome into each of the $m$ success outcomes by equal amounts, we have that the instrument 
\begin{equation}
\label{Eq:LOCCMix}
p\bigl((1-\eta)\mathfrak{D}+\eta\mathfrak{J}'\bigr)+(1-p)\mathfrak{D}=(1-p\eta)\mathfrak{D}+p\eta\mathfrak{J}'\in \LOCC_N
\end{equation}
whenever $p\eta\leq R_{\SEP}/(md^2)$.  Taking $p\eta=md\delta\leq md^2 D_\diamond(\mathfrak{D},\mathfrak{J})$, it follows that $\mathfrak{J}=(1-md\delta)\mathfrak{D}+md\delta\mathfrak{J}'\in\text{LOCC}_N$ whenever $D_\diamond(\mathfrak{D},\mathfrak{J})\leq  R_{\SEP}/(md^2)^2$.
\end{proof}

We next turn to the question of compactness.  As LOCC itself is not closed (see Sect.~\ref{subsect:relationship}), clearly it is not a compact set.  However, we will prove that when restricted to finite round protocols with finite number of outcomes, compactness indeed holds.  Such a result might not be entirely obvious; it is conceivable that, before coarse-graining to a finite number of outcomes in the final round, the protocol requires intermediate measurements with an unbounded number of outcomes.  We will show that no such requirement can exist for a finite round LOCC protocol.  Our first step is to apply Carath\'{e}odory's Theorem~\cite{Rockafellar-1996a} repeatedly to bound the number of measurement outcomes in each step.  The following theorem is proven in Appendix~\ref{appendixCaratheodory}.

\begin{restatable}{theorem}{ThmFinite}\label{Thm:Finite}
For an $N$-partite system, suppose that $\mathfrak{J}=(\mathcal{E}_1,\dotsc,\mathcal{E}_m)$ is an instrument in {\upshape $\text{LOCC}_r$}.  Let $d_K$ denote the local dimension of party $A_K$ and $D=\prod_{K=1}^Nd_K$ the dimension of the underlying global state space.  Then there exists an $r$-round protocol that implements $(\mathcal{E}_1,\dotsc,\mathcal{E}_m)$ such that each instrument in round $l \in [r]$ consists of no more than $mD^{4(r-l+1)}$ {\upshape CP} maps of the form $M(\cdot)M^\dagger$ for some $D \times D$ matrix $M$, with some overall coarse-graining performed at the end of the protocol.
\end{restatable}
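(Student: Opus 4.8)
The plan is to proceed by induction on the number of rounds $r$, pushing the outcome-count bound backwards from the final round to the first, using Carath\'{e}odory's Theorem to control each round's branching. The key structural fact is that an $\text{LOCC}_r$ instrument $\mathfrak{J}=(\mathcal{E}_1,\dotsc,\mathcal{E}_m)$ is, by definition, a coarse-graining of an instrument that is LOCC linked to some $\text{LOCC}_{r-1}$ instrument. Each final CP map $\mathcal{E}_i$ is thus a sum $\mathcal{E}_i = \sum_j \mathcal{B}_{i|j}\circ\mathcal{A}_j$, where the $\mathcal{A}_j$ come from the first $r-1$ rounds and the one-way local conditional instruments $\mathfrak{J}^{(K_j)}_j=(\mathcal{B}_{\cdot|j})$ supply the last round. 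First I would treat the innermost (final round) structure: for each fixed branch $j$, the contribution to $(\mathcal{E}_1,\dotsc,\mathcal{E}_m)$ is a coarse-grained one-way local instrument acting after $\mathcal{A}_j$, and I would apply Carath\'{e}odory to bound how many fine-grained $M(\cdot)M^\dagger$ maps this last instrument needs.

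The central tool is the following application of Carath\'{e}odory. Fix the preceding branch $j$ and consider the final one-way local instrument on that branch. A CP map from a fine-grained instrument has a Choi matrix that is a rank-one positive operator of the form $\op{\phi}{\phi}$, and an arbitrary CP map has a Choi matrix in the convex cone generated by these; the constraint that the collection forms a valid (sub)instrument feeding into the fixed target $(\mathcal{E}_1,\dotsc,\mathcal{E}_m)$ amounts to a finite number of affine constraints on Choi matrices living in a real vector space of dimension $D^4$ (since the Choi matrix of a map on $\mathcal{B}(\mathcal{H})$, with $\dim\mathcal{H}=D$, is a $D^2\times D^2$ Hermitian operator). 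Carath\'{e}odory then says that any point in the convex hull — equivalently, any valid CP map — can be written using at most $O(D^4)$ rank-one (fine-grained) terms. Since there are $m$ target outcomes, the last-round instrument on each branch needs at most $mD^4$ fine-grained CP maps; accounting for all branches and coarse-graining at the end, the number of maps produced in round $r$ is at most $mD^{4}$, consistent with the claimed bound $mD^{4(r-l+1)}$ at $l=r$.

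For the inductive step I would regard the $\text{LOCC}_{r-1}$ instrument $\mathfrak{J}_{r-1}=(\mathcal{A}_1,\mathcal{A}_2,\dotsc)$ as a new target whose required number of outcomes has been inflated by the branching factor just established for round $r$. Concretely, having shown that each surviving branch of $\mathfrak{J}_{r-1}$ needs to be refined into at most $D^4$ sub-branches to realize the final instrument, the effective outcome count of the target handed to the $(r-1)$-round subproblem is multiplied by $D^4$. Applying the inductive hypothesis to this enlarged target $\mathfrak{J}_{r-1}$ (with its outcome bound now $mD^4$ in place of $m$) yields that round $l$ of the first $r-1$ rounds uses at most $(mD^4)\,D^{4((r-1)-l+1)} = mD^{4(r-l+1)}$ fine-grained CP maps, which is exactly the stated bound. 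The base case $r=1$ is a single one-way local instrument whose last-round analysis above already gives $mD^4 = mD^{4(r-l+1)}$ at $l=r=1$.

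The main obstacle I expect is making the Carath\'{e}odory count rigorous at the level of \emph{instruments} rather than single maps, and in particular bookkeeping the dimension of the convex set correctly so that the constant in the exponent is genuinely $4$ and not larger. One must verify that the relevant object — the tuple of Choi matrices of a one-way local instrument, subject to the trace-preservation constraint and to the affine condition that its coarse-graining matches the fixed target — lies in an affine slice of dimension at most $D^4$, and that its extreme points are the fine-grained (rank-one-Kraus) instruments. A second subtlety is ensuring that the one-way local \emph{product} structure (party $K_j$ acts, others apply a fixed TCP map depending on $j$) is preserved when passing to the Carath\'{e}odory decomposition: one should apply the argument only to the single acting party's local instrument $(\mathcal{E}_1^{(K_j)},\mathcal{E}_2^{(K_j)},\dotsc)$ on $\mathcal{B}(\mathcal{H}^{A_{K_j}})$, where the relevant Choi dimension is $d_{K_j}^4 \le D^4$, so that the bound $D^4$ is comfortably respected and the locality constraint is automatically maintained. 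Carrying the inflated outcome bound cleanly through the induction, and confirming that the product of per-round branching factors telescopes to $D^{4(r-l+1)}$, is the bookkeeping heart of the proof.
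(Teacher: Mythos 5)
Your overall strategy---fine-grain to rank-one Kraus maps, then apply Carath\'{e}odory round by round from the back so that the relevant convex-set dimension grows multiplicatively and telescopes to $mD^{4(r-l+1)}$---is the same as the paper's, and your last-round count of $mD^4$ (really $m d_{K}^4$ for the acting party $K$) matches theirs exactly. However, the inductive step as you state it has a genuine gap. You propose to ``regard $\mathfrak{J}_{r-1}=(\mathcal{A}_1,\mathcal{A}_2,\dotsc)$ as a new target whose required number of outcomes has been inflated to $mD^4$,'' but there is no such well-defined target: the round-$r$ conditional instruments $(\mathcal{B}_{i_r|j})$ depend on the branch $j$, so the first $r-1$ rounds are not tasked with implementing any fixed coarse-grained $(r-1)$-round instrument with $mD^4$ outcomes. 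What must be preserved when Carath\'{e}odory reselects and reweights the branches $j$ is the \emph{composite} $\sum_j \mathcal{B}_{i_r|j}\circ\mathcal{A}_j$, and a reweighting that preserves $(\mathcal{A}_j)_j$ after some coarse-graining need not preserve this composite. The paper closes exactly this hole by applying Carath\'{e}odory not to the round-$l$ Choi matrices alone but to operators of the form $(\mathbb{I}\otimes\mathcal{A}_{i_l}^{(\vec{\imath})})[\Phi]\otimes\rho^{(\vec{\imath}\,i_l)}$, where $\rho^{(\vec{\imath}\,i_l)}$ records the entire conditional remainder of the protocol on that branch (including the classical label $\lambda$); the real dimension of the space these live in is what produces the factor $D^{4(r-l+1)}$. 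To salvage your induction you would need to strengthen the inductive hypothesis to quantify over an arbitrary assignment of ``future operators'' in a space of bounded dimension to the branches of the $(r-1)$-round protocol---which is morally your ``inflated outcome count,'' but it must be formulated at the level of these tensor-product operators, not as an instrument with more outcomes.

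Two further ingredients of the theorem are missing from your plan. First, the claim that every Kraus operator is a $D\times D$ matrix requires a normalization step: the paper polar-decomposes each $M=UA$ and pushes the isometry $U$ into the next round, so that every map (except in the last round) takes the fixed space to itself; without this the local Choi dimensions are not uniformly controlled. Second, you never verify that the modified protocol is still an $r$-round protocol. The Carath\'{e}odory reduction deletes and reweights branches of a measurement-ordered expansion (one predetermined acting party per level, possibly $Nr$ levels), and one must check---as the paper does---that trivial operations remain trivial at the same nodes, so the expansion compresses back to $r$ genuine rounds.
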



\begin{corollary}
Consider an $N$-partite system of total dimension $D$.  For any $r,m\in\mathbb{N}$, the subset of instruments in $\text{{\upshape LOCC}}_r$ with $m$ outcomes is compact in the set of all quantum instruments.  
\end{corollary}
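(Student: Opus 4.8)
The plan is to exhibit $\text{LOCC}_r[m]$ as the continuous image of a compact set, from which compactness follows at once. Since the index set $[m]$ is fixed and finite, the ambient space $\CP[m]$ is a closed, bounded subset of a finite-dimensional real vector space, so here compactness amounts to being closed and bounded; boundedness of $\text{LOCC}_r[m] \subseteq \CP[m]$ is immediate, and the entire content lies in controlling the intermediate rounds of a protocol. This is precisely what Theorem~\ref{Thm:Finite} supplies.

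First I would use Theorem~\ref{Thm:Finite} to bound, uniformly over all of $\text{LOCC}_r[m]$, the complexity of the protocols that must be considered: every such instrument admits an $r$-round implementation in which the one-way local instrument applied at each node of the protocol tree has at most $mD^{4(r-l+1)}$ fine-grained branches in round $l$, followed by a single coarse-graining onto $[m]$. Consequently each admissible protocol is described by finitely many pieces of data: a \emph{discrete} part, namely which party measures at each node together with the final coarse-graining map $f$ onto $[m]$; and a \emph{continuous} part, namely the one-way local instruments sitting at the nodes.

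Next I would argue that, with the discrete data held fixed, the continuous part ranges over a compact set. For a fixed measuring party and a bounded number of outcomes, the one-way local instruments form a compact set: they are the continuous (tensor-product) image of a compact product of local instruments on the measuring party's space and TCP maps on the remaining parties' spaces, all of which are compact in finite dimension. The tree has finitely many nodes, so the node-instruments range over a finite product of such compact sets, which is again compact. Because the discrete data takes only finitely many values, the full space of admissible protocols is a finite union of compact sets, hence compact.

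Finally I would observe that the implementation map --- composing the node-instruments along each branch and summing the resulting CP maps according to $f$ --- is continuous, being built from multilinear composition and addition of bounded operators in a finite-dimensional space. By Theorem~\ref{Thm:Finite} the image of this map is exactly $\text{LOCC}_r[m]$ (every element of $\text{LOCC}_r[m]$ arises this way, and conversely every image is by construction an $r$-round, $m$-outcome LOCC instrument), so $\text{LOCC}_r[m]$ is a continuous image of a compact set and therefore compact. I expect the only delicate bookkeeping to be the discrete data: since the choice of measuring parties and of $f$ is not a continuous parameter, one cannot absorb it into a single compact factor but must fix it and take a finite union (equivalently, in a sequential formulation, pass to a subsequence on which these choices are constant by the pigeonhole principle). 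Every remaining step --- boundedness of the Kraus operators via trace-preservation, closedness of the locality constraints, and continuity of composition and coarse-graining --- is a routine consequence of finite-dimensionality once the uniform outcome bound of Theorem~\ref{Thm:Finite} is in hand.
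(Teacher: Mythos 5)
Your proof is correct and takes essentially the same approach as the paper: both rest on Theorem~\ref{Thm:Finite} to uniformly bound the number of outcomes in each round, parametrize the admissible protocols by a compact set (a finite union over the discrete choices of measuring party and coarse-graining, times closed and bounded sets of matrices or node-instruments constrained by trace preservation), and conclude via continuity of the implementation map. Your write-up merely makes explicit the continuous-image and finite-union steps that the paper compresses into the remark that the feasible set is cut out by a finite collection of algebraic constraints.
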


\begin{proof}
By the previous theorem, any such LOCC instrument belonging to this set can be characterized by matrices:
\begin{align}
\label{Eq:ops}
\bigl\{
 M^{}_{i_1}, M_{i_2}^{(i_1)}, \dotsc, M_{i_r}^{(i_1\dotsc i_{r-1})} :
 i_1 \in [n_1],
 i_2 \in [n_2], \dotsc,
 i_r \in [n_r]
\bigr\}
\end{align}
where (i) for each $i_l$ the $M_{i_l}^{(i_1\dotsc i_{l-1})}$ are square matrices of some fixed size $<D$, (ii) $n_l \leq mD^{4(r-l-1)}$ for $1\leq l \leq r$ and (iii) $\sum_{i_l=1}^{n_l} \bigl(M_{i_l}^{(\vec{s})}\bigr)^\dagger M_{i_l}^{(\vec{s})} = \mathbb{1}$ for all $1\leq l \leq r$ and outcome strings $\vec{s}$.  As this is a finite collection of algebraic constraints, the set of feasible instruements is both closed and bounded.
\end{proof}

\section{Is LOCC closed?}
\label{Sect:bipartite}

Bipartite entanglement is known to behave differently than its multipartite counterpart in many circumstances.  One such example is the LOCC transformation of some pure state from one form to another.  While in the bipartite case such a transformation can always be completed in one round of LOCC if it is possible at all \cite{Lo-1997a}, the same is not true for tripartite pure state manipulations.  

Among the separation results listed in Sect.~\ref{subsect:relationship}, whether $\text{LOCC}=\overline{\text{LOCC}}$ in the bipartite case is unknown prior to this work.  In this section, we exhibit a bipartite instrument $\mathfrak{J}$ which is in $\overline{\text{LOCC}}$ but not in $\text{LOCC}$.  This instrument acts only on two qubits, and is given by
\begin{equation}
  \mathfrak{J} = (\mathcal{E}_{00}, \mathcal{E}_{01}, \mathcal{E}_{10})
  \label{Eq:J}
\end{equation}
where
\begin{align}
  \mathcal{E}_{00}(\rho)
  &:= \op{11}{11} \rho \op{11}{11}, \notag \\
  \mathcal{E}_{01}(\rho)
  &:= \sum_{i=1}^2 \bigl(T_i\otimes\op{0}{0}\bigr) \rho
                   \bigl(T^\dagger_i\otimes\op{0}{0}\bigr), \label{Eq:maps} \\
  \mathcal{E}_{10}(\rho)
  &:= \sum_{i=1}^2 \bigl(\op{0}{0}\otimes T_i\bigr) \rho
                   \bigl(\op{0}{0}\otimes T^\dagger_i\bigr), \notag
\end{align}
with
\begin{align}
  \label{Eq:Ti}
  T_1 &:=
  \begin{pmatrix}
    \frac{1}{\sqrt{3}} & 0 \\
    0 & \frac{1}{\sqrt{3}}
  \end{pmatrix}, &
  T_2 &:=
  \begin{pmatrix}
    \frac{1}{\sqrt{6}} & 0 \\
   0 & \sqrt{\frac{2}{3}}
  \end{pmatrix}.
\end{align}

We construct in Sect.~\ref{Sect:limitLOCC} a sequence of LOCC instruments that converges to $\mathfrak{J}$, thereby showing $\mathfrak{J} \in \overline{\text{LOCC}}$.  In Sect.~\ref{Sect:LOCCimpossible} we prove that $\mathfrak{J} \notin \text{LOCC}$, by demonstrating that the induced instrument $\mathfrak{J} \otimes \mathbb{I}$ on three qubits, where $\mathbb{I}$ denotes the identity instrument, is not in $\text{LOCC}$.  We show that, if $\mathfrak{J} \otimes \mathbb{I}$ is in $\text{LOCC}$, its effect on a particular three-qubit pure state violates the monotonicity of a quantity that we derive in Sect.~\ref{Sect:randdist}.  This quantity pertains to a random entanglement distillation task that we also review and discuss.

\subsection{Proof of \texorpdfstring{$\mathfrak{J} \in \overline{\text{LOCC}}$}{}}
\label{Sect:limitLOCC}
We construct a sequence of LOCC instruments $\{\mathfrak{J}_1,\mathfrak{J}_2,\dotsc\}$ taken directly from the Fortescue-Lo random distillation scheme \cite{Fortescue-2007a}, except here we just consider bipartite systems.  Let $\epsilon >0$ be fixed.  Consider the measurement ${\cal M}(\rho) = M_0 \, \rho \, M_0 \otimes \op{0}{0} + M_1 \, \rho \, M_1 \otimes \op{1}{1}$ where   
\begin{align}
M_0&:=\sqrt{1-\epsilon}\op{0}{0}+\op{1}{1},&
M_1&:=\sqrt{\epsilon}\op{0}{0}
\end{align}
are diagonal.  For each $\nu\in\mathbb{N}$, $\mathfrak{J}_\nu$ is implemented by the following protocol.  Alice and Bob each perform the measurement ${\cal M}$ locally and share the measurement outcome.  If the joint outcome is one of $01$, $10$, or $11$, they stop.  If the joint outcome is $00$, then they repeat the same measurement procedure again.  After a maximum of $\nu$ iterations they stop.  Then, coarse-graining is applied to obtain 
$\mathfrak{J}_\nu = (\mathcal{E}_{\nu00},\mathcal{E}_{\nu01},\mathcal{E}_{\nu10},\mathcal{E}_{\nu11})$ where $\mathcal{E}_{\nu ij}$ includes all the cases when Alice and Bob stop upon obtaining the joint outcome $ij$.  More specifically, these four CP maps are respectively generated by the following sets of Kraus operators: 
$\{ M_0^{\nu}\otimes M_0^{\nu} \}$, 
$\{M_0^\mu\otimes M_1^{} M_0^{\mu-1} : \mu \in [\nu]\}$, 
$\{M_1^{} M_0^{\mu-1}\otimes M_0^{\mu} : \mu \in [\nu]\}$, and 
$\{M_1^{} M_0^{\mu-1}\otimes M_1^{} M_0^{\mu-1} : \mu \in [\nu]\}$.


To see what this instrument looks like, we consider the Choi matrices of its CP maps.  
The four Choi matrices corresponding to the four elements in $\mathfrak{J}_{\nu}$ (up to the ordering of spaces in the tensor product) are:
\begin{align}
  \Omega_{\nu 00} \; &= \sum_{w,x,y,z=0}^1 \op{wx}{yz}^{A'B'} \otimes
  (M_0^{\nu}\otimes M_0^{\nu}) \op{wx}{yz}^{AB} (M_0^{\nu}\otimes M_0^{\nu}), \notag \\
  \Omega_{\nu 01} \; &= \sum_{w,x,y,z=0}^1 \op{wx}{yz}^{A'B'} \otimes \sum_{\mu=1}^{\nu}
  (M_0^\mu\otimes M_1M_0^{\mu-1}) \op{wx}{yz}^{AB} (M_0^\mu\otimes M_0^{\mu-1}M_1), \notag \\
  \Omega_{\nu 10} \; &= \sum_{w,x,y,z=0}^1 \op{wx}{yz}^{A'B'} \otimes \sum_{\mu=1}^{\nu}
  (M_1M_0^{\mu-1}\otimes M_0^\mu) \op{wx}{yz}^{AB} (M_0^{\mu-1}M_1\otimes M_0^\mu), \notag \\
  \Omega_{\nu 11} \; &= \sum_{w,x,y,z=0}^1 \op{wx}{yz}^{A'B'} \otimes \sum_{\mu=1}^{\nu}
  (M_1M_0^{\mu-1}\otimes M_1M_0^{\mu-1}) \op{wx}{yz}^{AB} (M_0^{\mu-1}M_1\otimes M_0^{\mu-1}M_1).
\end{align}

We want to show that $\lim_{\epsilon \rightarrow 0} \lim_{\nu \rightarrow \infty} \mathfrak{J}_{\nu} = \mathfrak{J}$ where the target instrument has been padded to become $\mathfrak{J} = (\mathcal{E}_{00}, \mathcal{E}_{01}, \mathcal{E}_{10}, 0)$ where the CP maps $\mathcal{E}_{ij}$ are defined in Eq.~\eqref{Eq:maps}.  We will do this by considering the limiting matrices $\Omega_{ij}:= \lim_{\epsilon \rightarrow 0} \lim_{\nu \rightarrow \infty} \Omega_{\nu ij}$.
Note that formally we have to choose $\epsilon$ to be some function of $\nu$ for specifying a sequence of instruments $\mathfrak{J}_1,\mathfrak{J}_2,\dotsc$ that converges to $\mathfrak{J}$.  However, we have verified that in all cases the double limit agrees with the single limit $\nu \to \infty$ if we choose $\epsilon := \nu^{-c}$ for some $0 < c < 1$. Thus, for the sake of simplicity of the argument we will compute the double limit.

For $ij=11$, note that $M_1^{} M_0^{\mu-1} = M_0^{\mu-1} M_1^{} = \sqrt{\epsilon(1-\epsilon)^{\mu-1}} \op{0}{0}$, so we have
\begin{equation}
\Omega_{\nu 11} = \sum_{\mu=1}^{\nu} \epsilon^2 (1-\epsilon)^{2\mu-2}\op{00}{00}^{A'A}\otimes\op{00}{00}^{B'B} \,.
\end{equation}
In the $\nu \to \infty$ limit we get $\epsilon^2 \sum_{\mu = 0}^{\infty} (1 - \epsilon)^{2\mu} = \frac{\epsilon^2}{1 - (1-\epsilon)^2} = \frac{\epsilon}{2-\epsilon}$, so $\Omega_{11} = \lim_{\epsilon \to 0} \lim_{\nu \to \infty} \Omega_{\nu 11} = 0$ as desired. Similarly, for $ij=00$, it is easy to see that $\Omega_{00} = \op{11}{11}^{A'A} \otimes \op{11}{11}^{B'B}$.  

For $ij=01$, note that $M_0^{\mu} = \sqrt{(1-\epsilon)^{\mu}} \op{0}{0} + \op{1}{1}$, so we have
\begin{align}
  \Omega_{\nu 01}
  &= \sum_{\mu=1}^\nu
     \Biggl( \sum_{w,y=0}^1 \op{w}{y}^{A'} \otimes M_0^\mu \,\op{w}{y}^{A} M_0^\mu \Biggr) \! \otimes \!
     \Biggl( \sum_{x,z=0}^1 \op{x}{z}^{B'} \otimes M_1 M_0^{\mu-1} \,\op{x}{z}^{B} M_0^{\mu-1} M_1 \Biggr) \\
  &= \sum_{\mu=1}^{\nu}
     \begin{pmatrix*}[l]
             (1-\epsilon)^\mu  & \sqrt{(1-\epsilon)^{\mu}} \\
       \sqrt{(1-\epsilon)^\mu} & 1
     \end{pmatrix*}^{A'A} \otimes \epsilon (1-\epsilon)^{\mu-1} \op{00}{00}^{B'B} \\
  &= \frac{\epsilon}{1-\epsilon} \sum_{\mu=1}^{\nu}
     \begin{pmatrix*}[l]
       (1-\epsilon)^{2\mu}   & (1-\epsilon)^{3\mu/2} \\
       (1-\epsilon)^{3\mu/2} & (1-\epsilon)^{\mu}
     \end{pmatrix*}^{A'A} \otimes \op{00}{00}^{B'B}
\end{align}
where the first tensor factor in the last two lines is written in the basis $\{\ket{00}_{A'A},\ket{11}_{A'A}\}$.  Evaluating the geometric series as $\nu \to \infty$ and then $\epsilon \to 0$, we see that
\begin{align}
\Omega_{01}&=\begin{pmatrix}1/2&2/3\\2/3&1\end{pmatrix}^{A'A}\otimes\op{00}{00}^{B'B}.
\end{align}
By permuting the parties, we obtain $\Omega_{10}$. 

Finally, it is easy to verify that $\Omega_{01}$ is indeed the Choi matrix of the map $\mathcal{E}_{01}$ given by Eq.~\eqref{Eq:maps}, and similarly for all other $\Omega_{ij}$.  Since the Hilbert space dimension is $4$ (a small constant), entry-wise convergence of the Choi matrix is equivalence to the convergence of instruments defined in Eq.~\eqref{Eq:diamond}.  

\subsection{Digression: Random Concurrence Distillation}
\label{Sect:randdist}

As mentioned in Sect.~\ref{Sect:bipartite}, we will prove LOCC infeasibility of the bipartite instrument $\mathfrak{J}$ in Eq.~\eqref{Eq:J} by showing infeasibility of the induced tripartite transformation $\mathfrak{J} \otimes \mathbb{I}$. Specifically, this is shown by considering the action of the instrument $\mathfrak{J} \otimes \mathbb{I}$ on the state $\ket{W}=\sqrt{1/3}(\ket{100}+\ket{010}+\ket{001})$.  

The state $\ket{W}$ is the canonical representative of the so-called W-class of states which consists of all three-qubit pure states that are locally unitarily (LU) equivalent to $\ket{\vec{x}}=\sqrt{x_0}\ket{000}+\sqrt{x_A}\ket{100}+\sqrt{x_B}\ket{010}+\sqrt{x_C}\ket{001}$ \cite{Dur-2000a} for some $\vec{x}=(x_A,x_B,x_C)$ (with $x_0=1-x_A-x_B-x_C$).  
For any normalized pure W-class state $\ket{\psi}$, we can unambiguously represent it by an $\vec{x}_\psi$ \cite{Kintas-2010a} as follows.  First, it is not difficult to show that, if $\ket{\vec{x}}$ is LU equivalent to $\ket{\vec{x}'}$, and all entries of $\vec{x}$ are nonzero, then $\vec{x}=\vec{x}'$. If $x_K=0$ for exactly one party $K$, signifying that only bipartite entanglement exists, then by Schmidt decomposing the two-qubit state of the entangled parties, we can take $x_0=0$ and can choose $x_A\geq \max\{x_B, x_C\}$ if $x_A \neq 0$, and $x_B \geq x_C$ otherwise.  Finally, if $x_K = 0$ for two or more parties, the state is fully separable, and we set $x_A=1$.  
%
Thus, LOCC transformations on W-class states can be viewed as transformations between their unique representations, and can be understood by tracking the changes to $\vec{x}$.  

Now, consider the following transformation task.  Suppose that Alice, Bob and Charlie are in possession of some W-class state, and they wish to obtain a bipartite pure state held by either Charlie-Alice or Charlie-Bob such that the expected \textit{concurrence} is maximized.  We call this task random concurrence distillation.  Recall that for a two-qubit pure state of the form $\ket{\psi}=\alpha\ket{01}+\beta\ket{10}$, its concurrence $C(\psi)$ is given by $2|\alpha\beta|$ \cite{Wootters-1998a}.  

The described problem generalizes two different tasks in quantum information processing: \textit{fixed-pair concurrence distillation} and \textit{random EPR combing}.  In fixed-pair concurrence distillation, one target pair of parties is \textit{a priori} specified, and the goal is for the third party to measure in such a way that maximizes the average post-measurement concurrence shared between the target pair.  This optimal value is known as the Concurrence of Assistance (COA).  In Ref.~\cite{Laustsen-2003a} it is shown that for a W-class state with $x_0=0$, its COA is $2\sqrt{x_Jx_K}$ when parties $J$ and $K$ constitute the target pair.  In contrast, in random EPR combing, two target pairs of parties are \textit{a priori} specified, and the goal is to maximize the probability that either of these pairs obtains an EPR state.  For example, if Alice is the common party to both target pairs, it has recently been shown that for W-class states with $x_0=0$, the optimal probability is $2(x_B+x_C-x_Bx_C/x_A)$ when $x_A\geq \max\{x_B,x_C\}$ and $2x_A$ otherwise \cite{Chitambar-2012a}.  Random concurrence distillation is a hybrid of these two problems in that the goal is to randomly distill entanglement to two different pairs, but instead of using the probability of obtaining an EPR state as a success measure, we consider the average concurrence distilled.  The following lemma, proved in Appendix~\ref{appendixB}, provides an upper bound of the random concurrence distillable between a fixed party $\star\in\{A,B,C\}$ and the remaining two parties.  

\begin{restatable}{lemma}{LemMonotone}\label{Lem:Concurrence Monotone}
Let $\star\in\{A,B,C\}$ denote any fixed party.  For a W-class state $\sqrt{x_A}\ket{100}+\sqrt{x_B}\ket{010}+\sqrt{x_C}\ket{001}$ take $\{n_1,n_2\}=\{A,B,C\}\setminus\{\star\}$ such that $x_{n_1}\geq x_{n_2}$, and consider the function
\begin{equation}
\label{Eq:Concfunc}
\mathcal{C}(\vec{x}) :=
\begin{cases}
  2 \sqrt{x_\star x_{n_1}} + \frac{2}{3} x_{n_2} \sqrt{\frac{x_\star}{x_{n_1}}}
    & \text{if $x_{n_1} \neq 0$,} \\
  0 & \text{otherwise.}
\end{cases}
\end{equation}
The function $\mathcal{C}$ is non-increasing on average under LOCC.  Furthermore, it is strictly decreasing on average when either party ``$\star$'' or $n_1$ performs a non-trivial measurement.
\end{restatable}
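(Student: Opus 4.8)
The plan is to establish both assertions --- average non-increase under LOCC and strict average decrease when $\star$ or $n_1$ acts non-trivially --- via the standard reduction for entanglement monotones. First I would record that $\mathcal{C}$ is a function of the triple $(x_A,x_B,x_C)$ alone, and that each $x_K$ is the nontrivial eigenvalue of the single-party marginal: for $\sqrt{x_A}\ket{100}+\sqrt{x_B}\ket{010}+\sqrt{x_C}\ket{001}$ one computes $\rho_K=\mathrm{diag}(1-x_K,x_K)$. Hence $\mathcal{C}$ is invariant under local unitaries, and by the standard criterion it is an LOCC monotone as soon as it does not increase on average under a measurement by a single party. Since any local instrument refines into a sequence of two-outcome measurements and average non-increase composes, it suffices to treat one two-outcome measurement on one chosen qubit (the boundary case $x_{n_1}=0$, where $\mathcal{C}=0$ and the state is a product across the relevant cut, is immediate).

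Next I would simplify the measurement form. For a two-outcome measurement the Kraus operators may be taken positive semidefinite (polar unitaries absorbed into post-measurement local unitaries, which leave $\mathcal{C}$ fixed), and two positive semidefinite operators whose squares sum to $\mathbb{1}$ commute and are simultaneously diagonalizable. Thus both branches are filters $M_k=W\,\mathrm{diag}(\alpha_k,\beta_k)\,W^\dagger$ diagonal in a common single-qubit basis fixed by $W$; peeling off the left factor $W$ (again a harmless local unitary) reduces matters to applying computational-basis filters $\mathrm{diag}(\alpha_k,\beta_k)$ to the (generally rotated) state $W^\dagger$ acting on the canonical W-state. For filters aligned with the marginal eigenbasis the bookkeeping is transparent: a filter on $\star$ sends $x'_\star=\beta^2 x_\star/N$ and $x'_{n_i}=\alpha^2 x_{n_i}/N$, preserving the ratio $x_{n_1}\!:\!x_{n_2}$, while a filter on $n_1$ or $n_2$ rescales only that party's excitation.

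I would then evaluate $\mathcal{C}$ on the post-measurement parameters. For party $\star$ a direct computation gives the clean factorization $\mathcal{C}(\vec{x}_k)=\tfrac{\alpha_k\beta_k}{N_k}\,\mathcal{C}(\vec{x})$, so the branch weight $N_k$ cancels and $\sum_k N_k\,\mathcal{C}(\vec{x}_k)=(\alpha_1\beta_1+\alpha_2\beta_2)\,\mathcal{C}(\vec{x})\le\mathcal{C}(\vec{x})$ by Cauchy--Schwarz, with equality exactly when $(\alpha_1,\alpha_2)\parallel(\beta_1,\beta_2)$, i.e. the measurement is trivial; this yields both monotonicity and the strict decrease for $\star$. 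For party $n_2$ the same computation gives exact conservation $\sum_k N_k\,\mathcal{C}(\vec{x}_k)=\mathcal{C}(\vec{x})$, consistent with strictness being claimed only for $\star$ and $n_1$. Party $n_1$ is delicate: the two terms of $\mathcal{C}$ then scale with differing weights (schematically $\alpha_k\beta_k$ and $\alpha_k^3/\beta_k$), the first decreasing and the second possibly increasing, and the ordering constraint $x_{n_1}\ge x_{n_2}$ is precisely what forces the net change to be non-positive, with strictness again read off from the Cauchy--Schwarz equality condition.

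The hard part will be controlling the two features that the aligned-filter computation suppresses. First, a genuine single-qubit measurement is diagonal in an arbitrary basis $W$, not the marginal eigenbasis, so one must either argue that misaligned filters can only decrease $\sum_k N_k\,\mathcal{C}(\vec{x}_k)$ further (so that the aligned filter is the extremal case to check) or carry the full computation through with $W^\dagger$ acting on the state and re-canonicalizing the resulting W-class states. Second, a strong filter on $n_1$ or $n_2$ can invert $x_{n_1}\ge x_{n_2}$, forcing a relabeling of which party plays the role of $n_1$, and the inequality must be re-examined across that switch. Managing the arbitrary measurement axis and the ordering bookkeeping is where the substance of the appendix lies, while the algebraic core is the Cauchy--Schwarz estimate together with the decisive role of the ordering $x_{n_1}\ge x_{n_2}$.
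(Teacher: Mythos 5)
Your skeleton is right --- reduce to two-outcome measurements on a single party, use LU-invariance of $\mathcal{C}$, split into cases by which party measures, and for party $\star$ exploit the clean factorization $p_k\,\mathcal{C}(\vec{x}_k)=\sqrt{a_kc_k}\,\mathcal{C}(\vec{x})$ plus Cauchy--Schwarz. That part matches the paper. But the two issues you defer to ``the hard part'' are exactly where the proof lives, and your proposed normal form does not resolve the first of them. The paper does not use the polar/simultaneously-diagonalizable form at all; it writes each Kraus operator as a unitary times an \emph{upper-triangular} matrix $\bigl(\begin{smallmatrix}\sqrt{a_\lambda} & b_\lambda\\ 0 & \sqrt{c_\lambda}\end{smallmatrix}\bigr)$. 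The point of this choice is that the off-diagonal entry $b_\lambda$ only feeds amplitude from $\ket{1}$ into $\ket{0}$ on the measured qubit, so it changes only the $x_0$-component of the W-class representation and leaves $(x_A,x_B,x_C)$ depending on $a_\lambda,c_\lambda$ alone; its sole trace is the relaxed constraint $c_1+c_2\le 1$ (with $a_1+a_2=1$). This kills the ``arbitrary measurement axis'' problem in one stroke, whereas your reduction leaves you with a diagonal filter acting on a locally rotated W-state that must be re-canonicalized --- a computation you do not carry out, and for which ``misaligned filters can only decrease the average further'' is asserted, not proved.

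The second gap is the $n_1$ case. You say the ordering $x_{n_1}\ge x_{n_2}$ ``forces the net change to be non-positive, with strictness read off from the Cauchy--Schwarz equality condition,'' but the two terms of $\mathcal{C}$ scale as $\sqrt{a_\lambda c_\lambda}$ and $a_\lambda^{3/2}c_\lambda^{-1/2}$ respectively, and the second sum can exceed $1$; no Cauchy--Schwarz estimate closes this. The paper's argument is genuinely perturbative: it decomposes the measurement into weak ones, shows $\partial\overline{\Delta\mathcal{C}}/\partial c_2>0$ so the worst case is $c_2=1-c_1$, and then Taylor-expands to second order about the trivial measurement $(a_1,c_1)=(1/2,1/2)$ to get $\overline{\Delta\mathcal{C}}\approx\sqrt{x_\star x_{n_1}}\bigl(\tfrac{x_{n_2}}{x_{n_1}}-1\bigr)(a_1-c_1)^2\le 0$. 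The weak-measurement decomposition also disposes of your relabeling worry in the case $x_{n_1}>x_{n_2}$ (a weak filter cannot invert the ordering), leaving only the symmetric case $x_{n_1}=x_{n_2}$ to be handled by an explicit branch-by-branch computation. One further small point: with a misaligned ($b_\lambda\neq 0$) measurement by $n_2$ the quantity $\mathcal{C}$ strictly decreases rather than being conserved, since $c_1+c_2<1$; your ``exact conservation'' holds only in the aligned case. None of these observations contradicts the lemma, but as written your argument proves it only for measurements diagonal in the marginal eigenbasis and only for parties $\star$ and $n_2$.
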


\begin{remark}
Note that if $x_\star = 0$, then $x_{n_1}>0$ and so $\mathcal{C}(\vec{x}) = 0$ by Eq.~\eqref{Eq:Concfunc}.  If $x_{n_1}=0$, then $\mathcal{C}(\vec{x})$ is defined to be $0$ since $x_{n_2}=0$ and so the state is fully separable.  When $x_{n_2}=0$, the function $\mathcal{C}$ reduces to the bipartite concurrence measure between parties $n_1$ and $\star$.  Consequently, what Lemma~\ref{Lem:Concurrence Monotone} says is that for any tripartite to bipartite entanglement conversion of a W-class state, the maximum average concurrence that some fixed party ``$\star$'' can share with any of the other two parties is no greater than $\mathcal{C}(\vec{x})$.  Finally, observe that for the state $\ket{W}=\sqrt{1/3}(\ket{100}+\ket{010}+\ket{001})$, we have $\mathcal{C}(W)=8/9$.
\end{remark}


\subsection{LOCC Impossibility}
\label{Sect:LOCCimpossible}

We now put the pieces together to prove that $\mathfrak{J} \notin \text{LOCC}$ where $\mathfrak{J}$ is the bipartite instrument defined in Eq.~\eqref{Eq:J}.  Suppose $\mathfrak{J} \otimes \mathbb{I}$ is applied to $\ket{W}=\sqrt{1/3}(\ket{100}+\ket{010}+\ket{001})$.  This induces the following state transformation:
\begin{equation}
\label{Eq:statetrans}
\op{W}{W} \mapsto
\begin{cases}
 \omega^{(AC)}\otimes\op{0}{0}^{(B)} & \text{with probability $1/2$},\\
 \op{0}{0}^{(A)}\otimes\omega^{(BC)} & \text{with probability $1/2$},
\end{cases}
\end{equation}
where 
\begin{equation*}
\omega:=\begin{pmatrix}0&0&0&0\\0&1/3&4/9&0\\0&4/9&2/3&0\\0&0&0&0\end{pmatrix}.
\end{equation*}
Since Eq.~\eqref{Eq:statetrans} describes a mixed state transformation, Lemma~\ref{Lem:Concurrence Monotone} cannot be used directly.  However, using the convex roof construction \cite{Horodecki-2001a}, we can extend the monotone $\mathcal{C}$ to mixed W-class states.  Recall that the set of mixed W-class states consists of all convex combinations of pure W-class states, biseparable states, and product states \cite{Acin-2001b}.  It represents a convex compact set in state space that is closed under LOCC operations.  Therefore, the following is well-defined.
\begin{corollary}
Let $\star\in\{A,B,C\}$ denote any fixed party and suppose $\rho$ is some mixed W-class state.  Define the function 
\begin{equation}
\hat{\mathcal{C}}(\rho)=\min_{p_i,\ket{\phi_i}}\sum_{p_i} p_i \, \mathcal{C}(\phi_i)
\end{equation}
where the minimization is taken over all pure state decompositions of $\rho=\sum_ip_i\op{\phi_i}{\phi_i}$, and $\mathcal{C}$ is the pure state entanglement monotone defined by Eq.~\eqref{Eq:Concfunc}.  Then $\hat{\mathcal{C}}$ is an entanglement monotone on the class of mixed W-class states.
\end{corollary}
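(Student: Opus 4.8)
The plan is to recognize $\hat{\mathcal{C}}$ as the convex roof extension of the pure-state monotone $\mathcal{C}$ and to invoke the standard argument that such an extension is automatically an entanglement monotone on mixed states \cite{Horodecki-2001a}. Two facts supply the needed input. First, $\mathcal{C}$ depends on $\ket{\psi}$ only through its local-unitary invariant representative $\vec{x}_\psi$, hence is LU-invariant; this makes $\mathcal{C}$ a well-defined function on pure W-class states, including the biseparable and product degenerations to which the representation scheme of Sect.~\ref{Sect:randdist} assigns a definite $\vec{x}$. Second, $\mathcal{C}$ is non-increasing on average under pure-state LOCC, which is precisely the content of Lemma~\ref{Lem:Concurrence Monotone}. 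Since the set of mixed W-class states is convex and compact and every such state admits a decomposition into pure states on which $\mathcal{C}$ is defined, the minimum defining $\hat{\mathcal{C}}$ is attained, so $\hat{\mathcal{C}}$ is finite and well-defined. What remains is to verify the two defining properties of an entanglement monotone: convexity, and non-increase on average under LOCC.

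First I would record convexity, which is immediate: for $\rho = \lambda\rho_1 + (1-\lambda)\rho_2$, merging the optimal decompositions of $\rho_1$ and $\rho_2$ weighted by $\lambda$ and $1-\lambda$ yields a valid decomposition of $\rho$, so $\hat{\mathcal{C}}(\rho) \le \lambda\hat{\mathcal{C}}(\rho_1) + (1-\lambda)\hat{\mathcal{C}}(\rho_2)$. In particular, discarding a classical outcome label can only weakly decrease $\hat{\mathcal{C}}$, which disposes of the coarse-graining steps present in any LOCC protocol.

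The core step is non-increase on average under a local operation by a single party, since every LOCC protocol composes such steps and the mixed W-class is closed under LOCC. Let $\{A_k\}$ be the local Kraus operators sending $\rho$ to the ensemble $\{p_k,\sigma_k\}$ with $\sigma_k = A_k\rho A_k^\dagger/p_k$ and $p_k = \tr(A_k\rho A_k^\dagger)$. I would fix an optimal decomposition $\rho = \sum_i q_i\op{\psi_i}{\psi_i}$ realizing $\hat{\mathcal{C}}(\rho) = \sum_i q_i\mathcal{C}(\psi_i)$, then push it through the operation: applying $\{A_k\}$ to each $\ket{\psi_i}$ produces branches $\ket{\phi_{ik}} = A_k\ket{\psi_i}/\sqrt{p_{k|i}}$ with $p_{k|i} = \braket{\psi_i}{A_k^\dagger A_k}{\psi_i}$, which remain (possibly degenerate) W-class pure states since that class is stable under local operations. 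The family $\{(q_i p_{k|i}/p_k,\ket{\phi_{ik}})\}_i$ is a decomposition of $\sigma_k$, so
\[
  \hat{\mathcal{C}}(\sigma_k) \le \sum_i \frac{q_i p_{k|i}}{p_k}\,\mathcal{C}(\phi_{ik}).
\]
Multiplying by $p_k$, summing over $k$, and exchanging the order of summation gives $\sum_k p_k\hat{\mathcal{C}}(\sigma_k) \le \sum_i q_i\sum_k p_{k|i}\,\mathcal{C}(\phi_{ik})$. For each fixed $i$ the inner sum is the average of $\mathcal{C}$ over the ensemble obtained by applying $\{A_k\}$ to the single pure state $\ket{\psi_i}$, which by Lemma~\ref{Lem:Concurrence Monotone} is at most $\mathcal{C}(\psi_i)$. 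Hence $\sum_k p_k\hat{\mathcal{C}}(\sigma_k) \le \sum_i q_i\mathcal{C}(\psi_i) = \hat{\mathcal{C}}(\rho)$, establishing monotonicity.

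The main obstacle will not be any single inequality but the bookkeeping that keeps every evaluation of $\hat{\mathcal{C}}$ inside its domain. One must check that each branch $\ket{\phi_{ik}}$ is again a W-class state or one of its biseparable/product limits, and that all the ensembles that appear are genuine decompositions of mixed W-class states; both rest on the closure of the pure and mixed W-classes under local operations and on the fact that the representation scheme of Sect.~\ref{Sect:randdist} assigns a value of $\mathcal{C}$ to every pure state in the closure of the W-class. Once this is secured, the pushing-through argument goes through verbatim and the two monotone properties follow.
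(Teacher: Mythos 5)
Your proposal is correct and is essentially the argument the paper has in mind: the paper simply invokes the standard convex-roof construction of Ref.~\cite{Horodecki-2001a} together with the facts that the mixed W-class is convex, compact, and closed under LOCC, and your write-up fills in exactly that standard push-the-optimal-decomposition-through-the-Kraus-operators argument, with Lemma~\ref{Lem:Concurrence Monotone} supplying the pure-state input. No substantive difference from the paper's (implicit) proof.
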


For transformation \eqref{Eq:statetrans}, we see that $\hat{\mathcal{C}}(W)=8/9$ and the final value of $\hat{\mathcal{C}}$ in both outcomes reduces to $\mathcal{C}(\omega)$, the concurrence of $\omega$ (i.e., it is the convex-roof extension of the bipartite pure state concurrence).  Using Wootters' formula for the concurrence \cite{Wootters-1998a}, we can compute that $\mathcal{C}(\omega)=8/9$.  Thus, $\hat{\mathcal{C}}$ must remain invariant on average during each measurement in a protocol that performs transformation \eqref{Eq:statetrans}.  However, we can always decompose the first non-unitary measurement into a local pure state transformation on $\ket{W}$, which by Lemma~\ref{Lem:Concurrence Monotone} will cause $\mathcal{C}$ to strictly decrease (choose $n_1$ as the first measuring party).  And because $\hat{\mathcal{C}}$ reduces to $\mathcal{C}$ on pure states, we therefore have that $\hat{\mathcal{C}}$ will necessarily decrease on average during the first non-trivial measurement.  Hence, transformation \eqref{Eq:statetrans} cannot be performed by LOCC, and so likewise, the original bipartite instrument is infeasible by LOCC.

This argument can also be used to prove the LOCC impossibility of other separable bipartite instruments.  For instance, consider the three CP maps $\mathcal{E}_i(\rho) := \Pi_i^{} \rho \Pi_i^\dagger$ where
\begin{align}
\Pi_1 &= \Bigl({\tfrac{1}{\sqrt{2}}}\op{0}{0}+\op{1}{1}\Bigr) \otimes \op{0}{0},\notag\\
\Pi_2 &= \op{0}{0} \otimes \Bigl(\tfrac{1}{\sqrt{2}}\op{0}{0}+\op{1}{1}\Bigr),\\
\Pi_3 &= \op{1}{1} \otimes \op{1}{1}.\notag
\end{align}
As $\sum_{i=1}^3 \Pi_i^\dagger \Pi^{}_i = I$ we have that $(\mathcal{E}_1,\mathcal{E}_2,\mathcal{E}_3)$ is a separable quantum instrument, and its action on $\ket{W}$ will yield bipartite mixed states with an average concurrence of $2\sqrt{2}/3>8/9$.  Hence, this transformation cannot even be approximated asymptotically, which provides another example of the operational gap between $\overline{\text{LOCC}}$ and SEP.

\section{What did we learn?}
\label{Sect:conclusion}

In this article, we have closely studied the structure of LOCC operations.  In light of recent findings concerning the nature of asymptotic LOCC processes, we have adopted the formalism of quantum instruments to precisely characterize the topological closure of LOCC.  Additionally, we have proven that the class of LOCC instruments acting on two qubits is not closed.  This resolves an open problem and reveals the complexity of LOCC even when dealing with the smallest bipartite systems.

There are a few interesting questions related to our work that deserve additional investigation.  First, all known examples that separate $\text{LOCC}$ from $\overline{\text{LOCC}}$ or $\overline{\text{LOCC}}$ from SEP make use of the classical information obtained from a quantum measurement.  For quantum channels with no classical register, it is unknown whether the same separation results hold (although, $\text{LOCC}_\mathbb{N}$ can be separated from SEP by such channels \cite{Chitambar-2008a}).  For instance, if we coarse grain over the three different maps given by Eq.~\eqref{Eq:maps}, is the resulting channel feasible by LOCC?  We conjecture that it is not, however our current proof techniques are unable to show this.  

On the other hand, one can ask how the operational classes compare if one is only interested in the classical information extracted from a quantum measurement, i.e., if attention is restricted only to POVMs.  While the state discrimination results take such an approach to separate $\overline{\text{LOCC}}$ from SEP, the random distillation examples demonstrating $\text{LOCC}_\mathbb{N} \neq \text{LOCC} \neq \overline{\text{LOCC}}$ depend crucially on the quantum outputs of the measurement.  Thus, it may be possible that $\text{LOCC}=\overline{\text{LOCC}}$ for POVMs.  We draw the reader's attention to Ref.~\cite{Kleinmann-2011a} in which a particular state discrimination problem is presented that directly questions this possibility. 

Finally, by using the distance measure between instruments described in Sect.~\ref{Sect:definitions}, one can meaningfully inquire about the size in separation between operational classes.  When given some instrument in SEP, what is the closet LOCC instrument?  Furthermore, is this distance related to the nonlocal resources needed to implement the separable instrument?   We hope this paper stimulates further research into such questions concerning the structure of LOCC.

\section*{Acknowledgments}
We thank Matthias Kleinmann, Hermann Kampermann, and Dagmar Bru\ss{} for a helpful discussion on asymptotic LOCC.  Additionally, much appreciation is extended to Daniel Gottesman for spotting an error in an earlier version of Sect.~\ref{Sect:bipartite}. DL, LM, and MO were supported by CFI, ORF, CIFAR, CRC, NSERC Discovery grant, NSERC QuantumWorks grant, MITACS, the Ontario Ministry of Research and Innovation, and the US ARO/DTO.
MO acknowledges additional support from the DARPA QUEST program under contract number HR0011-09-C-0047.
AW was supported by the European Commission (STREP ``QCS'' and
Integrated Project ``QESSENCE''), the ERC (Advanced Grant ``IRQUAT''),
a Royal Society Wolfson Merit Award and a Philip Leverhulme Prize.
The Centre for Quantum Technologies is funded by the Singapore
Ministry of Education and the National Research Foundation as part
of the Research Centres of Excellence programme.

\section*{Appendices}

\appendix

\section{Proof of Theorem~\ref{Thm:Finite}}
\label{appendixCaratheodory}

\ThmFinite*

\begin{proof}
We will prove Theorem~\ref{Thm:Finite} in multiple steps. A general $r$-round LOCC instrument can be represented by a tree partitioned into $r$ levels.  Within each level are nodes that correspond to the different one-way local LOCC instruments performed in that round.  The nodes are specified by their respective measurement histories $(i_1 i_2 \dotsc i_l)$.  Many parties may perform a quantum operation at each node, but only one party can perform a non-trace-preserving operation, and this party may vary across different nodes at each level.

Our first task is to convert a general protocol into one for which (i) except for the final round, the dimensions of the input and output spaces are the same for each map, and (ii) the party acting non-trivially at any node along the same level is fixed.  To obtain (i) we first fine-grain each local map into the form specified by the Theorem, i.e., $\rho\mapsto M\rho M^\dagger$.  This modification will only increase the number of edges coming out from a node, but will not change the total number of rounds, and the original instrument can be recovered by suitable coarse-graining at the end.  Then, we apply the polar decomposition $M=UA$ where $A$ is a square matrix and $U$ is an isometry.  Thus whenever $M(\cdot)M^\dagger$ is performed within the protocol, it can be replaced with $A(\cdot)A^\dagger$ combined with a pre-application of $U$ in the next level.  In other words, if the same party applies Kraus operator $N$ next in the original protocol, then $NU$ is applied instead.  Polar decomposing $NU$, we obtain $U'A'$ where $A'$ again maps the initial system to itself, and $U'$ is some other isometry to be moved to yet the next level.  Doing this inductively for all levels yields condition (i) in which the state lives in the same input system throughout, except for the last step. 

To describe simplification (ii), we introduce some terminology.  For an $r$-round LOCC protocol, we say its \textbf{measurement-ordered expansion} is a protocol obtained by increasing the number of rounds through the addition of trivial maps so that there is a unique predetermined party who can act non-trivially along all nodes in a given level.  This new protocol will consist of no more than $Nr$ levels.  A \textbf{measurement-ordered compression} is a reversal of the expansion.  Specifically, for a given LOCC protocol, consider every node $(i_1 \dotsc i_{l-2})$ in which a trivial map is performed at that node (this will be a measurement in round $l-1$), and suppose that $\mathfrak{T}$ is the subsequent one-way local instrument implemented in round $l$.  We modify the original protocol by performing $\mathfrak{T}$ instead of the trivial map at node $(i_1 \dotsc i_{l-2})$.  Doing this for each trivial operation and compressing recursively generates the measurement-ordered compression of the original protocol.

Now we return to the specific instrument $\mathfrak{J}=(\mathcal{E}_1,\dotsc,\mathcal{E}_m)$.  For simplicity we will assume that $N=2$, and more general cases follow by analogous constructions.  Suppose we are given some measurement-ordered protocol that can be compressed to $r$ rounds and which implements $\mathfrak{J}$.  Again for simplicity we will assume that the protocol just consists of $4$ rounds in which the order of measurement is Alice, Bob, Alice, Bob; the case of more rounds can be proved by induction.  For this protocol, the CP maps performed at node $(i_1 \dotsc i_{l-1})$ will be denoted by $\mathcal{A}^{(i_1 \dotsc i_{l-1})}_{i_1}$ ($\mathcal{B}^{(i_1 \dotsc i_{l-1})}_{i_1}$) when Alice (Bob) is the acting party in round $l$.  Thus the entire instrument $\mathfrak{J}$ can be expressed through the TCP map
\begin{equation}
  \mathcal{E}[\mathfrak{J}](\rho)
  = \sum_{\lambda=1}^m \mathcal{E}_\lambda(\rho) \otimes \op{\lambda}{\lambda}
  = \sum_{\lambda=1}^m \sum_{\vec{i}}
    \delta_{\lambda}(\vec{i})
    \mathcal{B}^{(i_1i_2i_{3})}_{i_4} \circ
    \mathcal{A}^{(i_1i_2)}_{i_3} \circ
    \mathcal{B}^{(i_1)}_{i_2} \circ
    \mathcal{A}_{i_1}(\rho) \otimes \op{\lambda}{\lambda},
\end{equation}
where $\delta_\lambda(\vec{i})=1$ if $\vec{i}=(i_1,i_2,i_3,i_4)$ is included in the coarse-graining of $\mathcal{E}_\lambda$ and $0$ otherwise.  Our goal is to obtain a bound on the number of different $\vec{i}$ needed to implement $\mathfrak{J}$.  We construct an operator representation of $\mathcal{E}[\mathfrak{J}]$ by introducing a copy of the local Hilbert space for each measurement performed in each round.  Then, we have the operator
\begin{align}
  \label{Eq:Choistate}
  \Omega&^{A_1'A_1^{}:\cdots
          :B_4'B_4^{}}
  = \sum_{\lambda=1}^m \sum_{\vec{i}} \delta_{\lambda}(\vec{i})
    (\mathbb{I}^{A'_1} \otimes \mathcal{A}_{i_1}) [\Phi^{A'_1A_1^{}}]
    \otimes \dotsb \otimes
    (\mathbb{I}^{B'_4} \otimes \mathcal{B}^{(i_1 \dotsc i_3)}_{i_4}) [\Phi^{B'_{4}B_{4}^{}}]
    \otimes \op{\lambda}{\lambda}.
\end{align}
Here, $\Phi^{A_i'A_i^{}}=\sum_{l,m=1}^{d_i}\op{ll}{mm}$ is a maximally entangled state on two copies of the space possessed by $A_i$ (likewise for $\Phi^{B'_iB_i^{}}$), and the $\mathcal{A}^{(i_1 \dotsc i_{l-1})}_{i_l}$ and $\mathcal{B}^{(i_1 \dotsc i_{l-1})}_{i_l}$ act on the ``unprimed'' copies in their respective rounds.  The action of $\mathcal{E}[\mathfrak{J}]$ is recovered by first supposing that the input state $\rho$ is shared across local registers $A_0$ and $B_0$.  Then $\mathcal{E}[\mathfrak{J}](\rho)$ is given by the following equality (see Fig.~\ref{fig:Protocol}):
\begin{equation}
\label{Eq:Choiequality}
  \mathcal{E}[\mathfrak{J}](\rho)
  = \tr_{A_0^{}B_0^{}A_1^{}A_1'B_2^{}B_2'A'_3B'_4}
    \Bigl[
      \bigl( \rho^{A_0B_0} \otimes \Omega^{A_1^{}A_1':\cdots:B_4^{}B_4'} \bigr)
      \bigl( \Phi^{A_0^{}A'_1}\Phi^{B_0^{}B'_2}\Phi^{A_1^{}A'_3}\Phi^{B_2^{}B'_4} \bigr)
    \Bigr].
\end{equation}

\begin{figure}
  \centering

\def\h{0.9cm} 
\def\W{6.6cm} 
\def\w{1.2cm} 

\def\d{0.2cm} 
\def\l{0.4cm} 

\begin{tikzpicture}

  \foreach \i/\P in {1/A, 2/B, 3/A, 4/B} {
    \draw (0,{-\h*(2*\i  )}) to (\W,{-\h*(2*\i  )});
    \draw (0,{-\h*(2*\i+1)}) to (\W,{-\h*(2*\i+1)});
    \draw (0,{-\h*(2*\i  )}) -- (-\d,{-\h*(2*\i+1/2)}) -- (0,{-\h*(2*\i+1)});
    \draw (-\l,{-\h*(2*\i  )}) node {$\P'_{\i}$};
    \draw (-\l,{-\h*(2*\i+1)}) node {$\P_{\i}$};
    \path (\w*\i,{-\h*(2*\i+1)}) coordinate (\P\i);
  }

  \draw[double, double distance = 1.5pt] (A1) -- (B2) -- (A3) -- (B4);

  \foreach \i/\P/\tt in {1/A/{}, 2/B/(i_1), 3/A/(i_1 i_2), 4/B/(i_1 i_2 i_3)} {
    \draw (\P\i) node[draw, fill = white] {$\mathcal{\P}_{i_{\i}}^{\tt}$};
  }

  \foreach \i in {1,2} {
    \draw (\W,{-\h*(2*\i+1)}) -- (\W+2*\d,{-\h*(2*\i+2.5)}) -- (\W,{-\h*(2*\i+4)});
  }

  \draw [dashed] (\W-\w/2,-1.8*\h) -- (\W-\w/2,-9.2*\h);
  \draw (\W-\w/2,-2*\h+\l) node {$\Omega$};

  \def\c{1.5cm}

  \foreach \i/\P in {1/A,2/B} {
    \draw[dotted] (-\c,{-\h*(\i-1)}) --
                  (\W,{-\h*(\i-1)}) --
                  (\W+2*\d,{-\h*(\i+(\i-1)*0.5)}) --
                  (\W,{-\h*2*\i});
    \draw (-\c-\l,{-\h*(\i-1)}) node {$\P_0$};
  }
  \draw (-\c-2.5*\l,-0.5*\h) node {$\rho \; \Bigg\{$};

  \foreach \i in {1,2} {
    \draw (\W,{-\h*(2*\i+5)}) -- (\W+\c,{-\h*(2*\i+5)});
  }

\end{tikzpicture}

  \caption{Construction of $\Omega$ with four rounds of LOCC according to Eq.~\eqref{Eq:Choistate}.  From the state $\Omega$ the action of $\mathcal{E}[\mathfrak{J}]$ on $\rho$ can be obtained by inputting $\rho$ into wires $A_0 B_0$ and applying the projector $\ket{\Phi}$ onto each of the the pairs $A_0^{}A_1'$, $B_0^{}B_2'$, $A_1^{}A_3'$, and $B_2^{}B_4'$ as in Eq.~\eqref{Eq:Choiequality}.}
  \label{fig:Protocol}
\end{figure}
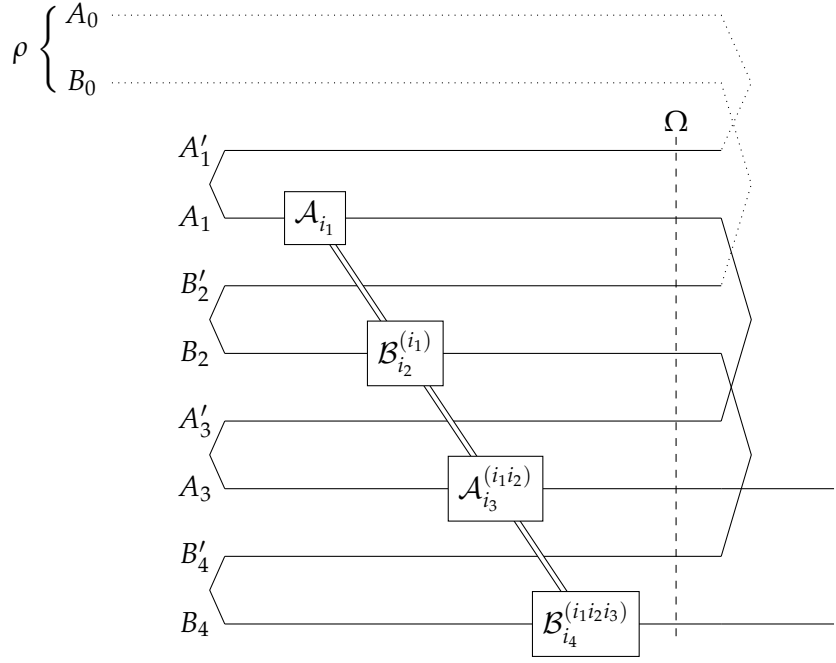

We next use Carath\'{e}odory's Theorem to rewrite the sum over $\vec{i}$ in Eq.~\eqref{Eq:Choistate}.
\begin{lemma}[Carath\'{e}odory's Theorem~\cite{Rockafellar-1996a}]
Let $S$ be a subset of $\mathbb{R}^n$ and $\conv(S)$ its convex hull.  Then any $x \in \conv(S)$ can be expressed as a convex combination of at most $n+1$ elements of $S$.
\end{lemma}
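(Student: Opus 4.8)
The plan is to prove Carath\'{e}odory's Theorem by a standard reduction argument: starting from any representation of $x$ as a convex combination of elements of $S$, I would repeatedly decrease the number of points used whenever this number exceeds $n+1$, until no further reduction is possible.

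First I would fix an arbitrary representation $x = \sum_{i=1}^k \lambda_i s_i$ with $s_i \in S$, each $\lambda_i > 0$, and $\sum_{i=1}^k \lambda_i = 1$; such a representation exists by definition of $\conv(S)$, and we may assume all coefficients are strictly positive simply by discarding any zero terms. If $k \le n+1$ we are done, so suppose $k \ge n+2$. The key observation is that more than $n+1$ points of $\mathbb{R}^n$ must be affinely dependent: the $k-1$ vectors $s_2 - s_1, \dots, s_k - s_1$ lie in $\mathbb{R}^n$ and there are $k-1 \ge n+1$ of them, hence they are linearly dependent. Unwinding such a dependence yields scalars $\mu_1, \dots, \mu_k$, not all zero, with $\sum_i \mu_i s_i = 0$ and $\sum_i \mu_i = 0$.

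Using this dependence I would perturb the coefficients without leaving $\conv(S)$: for every real $t$, the point $\sum_i (\lambda_i - t\mu_i) s_i$ still equals $x$ and its coefficients still sum to $1$. Since $\sum_i \mu_i = 0$ and the $\mu_i$ are not all zero, at least one $\mu_i$ is positive, so I may choose $t := \min\{\lambda_i/\mu_i : \mu_i > 0\} > 0$, the largest step preserving nonnegativity. With this $t$ every coefficient $\lambda_i - t\mu_i$ is $\ge 0$, they still sum to $1$, and at least one of them vanishes, giving a convex representation of $x$ that uses at most $k-1$ elements of $S$. Iterating this step strictly decreases $k$ each time, so after finitely many iterations we arrive at a representation with at most $n+1$ points, as claimed.

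The main obstacle --- indeed the only delicate point --- is guaranteeing that the coefficients stay nonnegative while strictly reducing their count, and this is exactly what the minimum-ratio choice of $t$ secures: it simultaneously maintains convexity (nonnegativity together with normalization, the latter coming from $\sum_i \mu_i = 0$) and forces at least one coefficient to zero. Everything else is routine: the affine-dependence step is elementary linear algebra, and termination is immediate because $k$ drops by at least one at every iteration.
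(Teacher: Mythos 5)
Your proof is correct and complete: the affine-dependence step, the minimum-ratio choice of $t$, and the termination argument are all sound, and this is the standard textbook proof of Carath\'{e}odory's Theorem. The paper itself does not prove this lemma at all --- it simply cites it as a known result from Rockafellar and applies it --- so there is no in-paper argument to compare against; your write-up supplies exactly the omitted classical proof.
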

\noindent As a Hermitian operator, every (unnormalized) mixed quantum state can be expressed by $(\dim)^2$ real numbers, where $\dim$ is the dimension of the underlying space on which the operator acts.  We first consider the sum over $\lambda$ and $i_{4}$ in Eq. \eqref{Eq:Choistate} with $i_1, i_2, i_3$ fixed.  This sum can be expressed as
\begin{align}
\label{Eq:Carath1}
\rho^{(i_1i_2i_3)}
&:=\sum_{\lambda=1}^m\sum_{i_4}\delta_\lambda(\vec{i})
(\mathbb{I}^{B'_4}\otimes\mathcal{B}^{(i_1i_2i_3)}_{i_4})[\Phi^{B_4'B_4^{}}]
\otimes\op{\lambda}{\lambda}\notag\\
&=\sum_{\lambda=1}^m\sum_{i_4}q_{\lambda, i_{4}}^{(i_1i_2i_3)}(\mathbb{I}^{B'_4}\otimes\mathcal{B}^{(i_1i_2i_3)}_{i_4})[\Phi^{B_4'B_4^{}}]
\otimes\op{\lambda}{\lambda}
\end{align}
where (i)~for each $\vec{i}$, $q_{\lambda, i_4}^{(i_1i_2i_3)}\geq 0$ is nonzero for only one value of $\lambda$ (the one for which $\delta_\lambda(\vec{i}) = 1$), (ii)~for each $\lambda$, $q_{\lambda, i_4}^{(i_1i_2i_3)}$ is nonzero for at most $d_B^4$ values of $\vec{i}$ (by Carath\'{e}odory's Theorem), and (iii)~$\sum_{\lambda=1}^m\sum_{i_4}q_{\lambda, i_4}^{(i_1i_2i_3)}\mathcal{B}^{(i_1i_2i_3)}_{i_4}$ is trace-preserving.  Thus by (i) and (ii) the sum in Eq. \eqref{Eq:Carath1} contains no more than $md_B^4$ terms.  

From these operators $\rho^{(i_1i_2i_3)}$, we construct the set 
\begin{equation}
T^{(i_1i_2)}:=\bigl\{(\mathbb{I}^{A'_3}\otimes\mathcal{A}^{(i_1i_2)}_{i_3})[\Phi^{A_3'A_3^{}}]\otimes\rho^{(i_1i_2i_3)}:i_3\bigr\}
\end{equation} and its convex hull $\conv(T^{(i_1i_2)})$.  For each fixed pair $i_1i_2$, we can apply Carath\'{e}odory's Theorem again when summing over $i_3$ in Eq. \eqref{Eq:Choistate} to obtain
\begin{align}
\sum_{i_3}\sum_{\lambda=1}^m\sum_{i_4}&\delta_\lambda(\vec{i})
(\mathbb{I}^{A'_3}\otimes\mathcal{A}^{(i_1i_2)}_{i_3})[\Phi^{A_3'A_3^{}}]\otimes
(\mathbb{I}^{B'_4}\otimes\mathcal{B}^{(i_1i_2i_3)}_{i_4})[\Phi^{B_4'B_4^{}}]\otimes\op{\lambda}{\lambda}\notag\\
&=\sum_{i_3}
(\mathbb{I}^{A'_3}\otimes\mathcal{A}^{(i_1i_2)}_{i_3})[\Phi^{A_3'A_3^{}}]\otimes\rho^{(i_1i_2i_3)}\notag\\
&=\sum_{i_3}q_{i_3}^{(i_1i_2)}
(\mathbb{I}^{A'_3}\otimes\mathcal{A}^{(i_1i_2)}_{i_3})[\Phi^{A_3'A_3^{}}]\otimes\rho^{(i_1i_2i_3)}\notag\\
&=\sum_{i_3, i_4}\sum_{\lambda=1}^mq_{i_3}^{(i_1i_2)}q_{ \lambda, i_4}^{(i_1i_2i_3)}
(\mathbb{I}^{A'_3}\otimes\mathcal{A}^{(i_1i_2)}_{i_3})
[\Phi^{A_3'A_3^{}}]\otimes
(\mathbb{I}^{B'_4}\otimes\mathcal{B}^{(i_1i_2i_3)}_{i_4})
[\Phi^{B_4'B_4^{}}]\otimes\op{\lambda}{\lambda},
\end{align}
where $q_{i_3}^{(i_1i_2)}\geq 0$ is nonzero for at most $m(d_Ad_B)^4$ values of $i_3$ and $\sum_{i_3}q_{i_3}^{(i_1i_2)}\mathcal{A}^{(i_1i_2)}_{i_3}$ is trace-preserving.  Repeating this argument on the sums over $i_1$ and $i_2$ respectively, Eq.~\eqref{Eq:Choistate} becomes
\begin{equation*}
\Omega^{A_1'A_1^{}:\cdots:B_4'B_4^{}}=
\sum_{\lambda=1}^m\sum_{\vec{i}}q_{i_1}^{}q_{i_2}^{(i_1)}q_{i_3}^{(i_1i_2)} q_{\lambda, i_4}^{(i_1i_2i_3)}
(\mathbb{I}^{A'_1}\otimes\mathcal{A}_{i_1})[\Phi^{A_1'A_1^{}}]
\otimes\cdots\otimes
(\mathbb{I}^{B'_4}\otimes\mathcal{B}^{(i_1i_2i_3)}_{i_4})[\Phi^{B_4'B_4^{}}]
\otimes\op{\lambda}{\lambda}
\end{equation*}
where again, $q_{i_2}^{(i_1)}\geq 0$ is nonzero for at most $m(d_Bd_Ad_B)^4$ values of $i_2$ while $q_{i_1}\geq 0$ is nonzero for at most $m(d_Ad_Bd_Ad_B)^4$ values of $i_1$.  Both $\sum_{i_1}q_{i_1}\mathcal{A}_{i_1}$ and $\sum_{i_2}q_{i_2}^{(i_1)}\mathcal{B}_{i_2}^{(i_1)}$ are trace-preserving.  

Thus the modified protocol consists of the instrument sequence
\begin{enumerate}
  \item Alice:    $\{q_{ i_1}\mathcal{A}_{i_1}: q_{i_1}>0\}$,
  \item Bob: \,\, $\{q_{ i_2}^{(i_1)}\mathcal{B}^{(i_1)}_{i_2}:q_{ i_2}^{(i_1)}>0\}$,
  \item Alice:    $\{q_{ i_3}^{(i_1i_2)}\mathcal{A}^{(i_1i_2)}_{i_3}:q_{ i_3}^{(i_1i_2)}>0\}$,
  \item Bob: \,\, $\{q_{\lambda, i_4}^{(i_1i_2i_3)}\mathcal{B}^{(i_1i_2i_3)}_{i_4}:q_{\lambda, i_4}^{(i_1i_2i_3)}>0\}$.
\end{enumerate}
An overall coarse-graining for each value of $\lambda$ is applied after the final measurement to obtain $\mathfrak{J}$.  Finally, observe that this modified protocol only differs from the original in the one-way local instruments performed at each node.  Furthermore, if a trivial operation is performed at any node in the original protocol, it will likewise be performed at the same node in this modified protocol.  Therefore, the ability to compress the original protocol to $r$ rounds of LOCC implies that the modified protocol can also be compressed to $r$ rounds.
\end{proof}

\section{Proof of Lemma~\ref{Lem:Concurrence Monotone}}
\label{appendixB}

\LemMonotone*

\begin{proof}
Note that if any of the $x_A, x_B, x_C$ is zero, the lemma is already true, so, we can focus on the case when all three parameters are nonzero.  The proof is analogous to the ones given in Refs.~\cite{Chitambar-2012a, Cui-2011a} where similar W-class entanglement monotones were derived.  We can decompose each local measurement into a sequence of binary measurements \cite{Anderson-2008a, Oreshkov-2005a}.  
By continuity of $\mathcal{C}$, it is then sufficient to show that $\mathcal{C}$ is non-increasing on average under such a measurement.  Each such measurement is specified by two Kraus operators $\{ M_\lambda : \lambda = 1,2 \}$, which can always be expressed in the form
\begin{equation}
  M_\lambda := U_\lambda \cdot
  \begin{pmatrix}
        \sqrt{a_\lambda} & b_\lambda \\
    0 & \sqrt{c_\lambda}
  \end{pmatrix},
\end{equation}
where $a_\lambda, c_\lambda \geq 0$, $b_\lambda \in \mathbb{C}$ and $U_\lambda$ is unitary (which acts trivially on the representation $\vec{x}$).  By the completeness relation $\sum_{\lambda=1}^2 M_\lambda^\dagger M_\lambda^{} = I$ we have
\begin{equation}
  \label{Eq:Completeness}
  \sum_{\lambda=1}^2
    \begin{pmatrix}
      a_\lambda & \sqrt{a_\lambda} b_\lambda \\
      \sqrt{a_\lambda} b^*_\lambda & |b_\lambda|^2 + c_\lambda
    \end{pmatrix}
  = \begin{pmatrix}
      1&0\\0&1
    \end{pmatrix},
\end{equation}
thus $a_1 + a_2 = 1$ and $c_1 + c_2 \leq 1$ with equality if and only if $b_1 = b_2 = 0$.


The state change induced by any such measurement can be understood readily by considering a concrete example, say when the first party measures:  
\begin{multline}
  \label{Eq:Alice measures}
  \sqrt{x_0}\ket{000}+\sqrt{x_A}\ket{100}+\sqrt{x_B}\ket{010}+\sqrt{x_C}\ket{001} \\
  \mapsto
  \frac{1}{\sqrt{p_\lambda}}
  \bigl[(\sqrt{a_\lambda x_0} + b_\lambda \sqrt{x_A}) \ket{000} +
         \sqrt{c_\lambda x_A} \ket{100} +
         \sqrt{a_\lambda x_B} \ket{010} +
         \sqrt{a_\lambda x_C} \ket{001}\bigr],
\end{multline}
where the final state is normalized with $p_\lambda$ being the probability of obtaining outcome $\lambda$.  The post-measurement state in Eq.~\eqref{Eq:Alice measures} is represented by vector $\vec{x}_\lambda = (\frac{c_\lambda}{p_\lambda} x_A, \frac{a_\lambda}{p_\lambda} x_B, \frac{a_\lambda}{p_\lambda} x_C)$.  More generally, when party $K \in \{A,B,C\}$ measures and outcome $\lambda$ occurs, 
then the representing vector components transform as 
\begin{equation}
  x_K \mapsto \frac{c_\lambda}{p_\lambda}x_K \quad \text{and} \quad
  x_J \mapsto \frac{a_\lambda}{p_\lambda}x_J \quad
  \text{for $J \in \{A,B,C\} \setminus \{K\}$}.
\label{Eq:statechange}
\end{equation}


We will consider three separate cases: (i)~when $K = \star$, (ii)~when $K \in \{n_1, n_2\}$ with $x_{n_1} > x_{n_2}$, and (iii)~when $K \in \{n_1, n_2\}$ and $x_{n_1} = x_{n_2}$.  Note from Eq.~\eqref{Eq:statechange} that in case (i) the assignments of parties $n_1$ and $n_2$ will remain unchanged in the pre and post-measurement states.  To simplify analysis, we can impose this also in case (ii) by assuming a sufficiently weak measurement (i.e., one for which $a_1 \approx c_1$ and $a_2 \approx c_2$). This is without loss of generality since any measurement can be decomposed as a sequence of weak measurements. However, in case (iii), the assignment of parties $n_1$ and $n_2$ in the post-measurement states can differ from the pre-measurement state and might even depend on the outcome.

Define the average change in $\mathcal{C}$ incurred by the measurement as
\begin{equation}
\overline{\Delta\mathcal{C}}:= p_1\mathcal{C}(\vec{x}_1)+ p_2\mathcal{C}(\vec{x}_2)-\mathcal{C}(\vec{x}).
\end{equation}
where $p_\lambda$ is the probability to obtain outcome $\lambda$ and $\vec{x}_\lambda$ is the representation of the corresponding post-measurement state.

\newcommand{\Case}[1]{\textbf{Case (#1)}}

\Case{i}
If party ``$\star$'' makes a measurement, 
\begin{equation}
\overline{\Delta\mathcal{C}}=
\left(2\sqrt{x_\star x_{n_1}}+\frac{2}{3}x_{n_2}\sqrt{\frac{x_\star}{x_{n_1}}}\right)
\bigl(\sqrt{c_1a_1}+\sqrt{c_2a_2}-1\bigr) \leq 0,
\end{equation}
with equality obtained only under the trivial measurement $a_1 = c_1 = d$ and $a_2 = c_2 = 1-d$ for some $d \in [0,1]$.
%

\Case{ii}
Suppose either party $n_1$ or $n_2$ measures and $x_{n_1} > x_{n_2}$.  If party $n_2$ measures, the assignment of parties in all post-measurement states remains the same if $a_\lambda x_{n_1} > c_\lambda x_{n_2}$ for all $\lambda \in \{1,2\}$.  Subject to constraints imposed by Eq.~\eqref{Eq:Completeness}, we can guarantee this by assuming that the measurement is sufficiently weak (i.e., $a_1 \approx c_1$ and $a_2 \approx c_2$).  Then
\begin{align}
  \overline{\Delta\mathcal{C}}
 &= \sum_{\lambda=1}^2 \left( a_\lambda 2 \sqrt{x_\star x_{n_1}} + c_\lambda \frac{2}{3} x_{n_2}
    \sqrt{\frac{x_\star}{x_{n_1}}} \right) 
  - \biggl( 2\sqrt{x_\star x_{n_1}}+\frac{2}{3}x_{n_2}\sqrt{\frac{x_\star}{x_{n_1}}} \biggr) \notag \\
 &= (c_1 + c_2 - 1) \; \frac{2}{3} \; x_{n_2} \sqrt{\frac{x_\star}{x_{n_1}}} \leq 0,
\end{align}
with equality attained only when $c_1 + c_2 = 1$ (and thus $b_1 = b_2 = 0$).
Similarly, if party $n_1$ performs a sufficiently weak measurement (i.e., $c_\lambda x_{n_1} > a_\lambda x_{n_2}$ for all $\lambda \in \{1,2\}$),
\begin{align}
\label{Eq:Function change}
  \overline{\Delta\mathcal{C}}
 &= 2 \sqrt{x_\star x_{n_1}} \bigl(\sqrt{a_1c_1}+\sqrt{a_2c_2}-1\bigr)
  + \frac{2}{3}x_{n_2}\sqrt{\frac{x_\star}{x_{n_1}}}
    \biggl(
      a_1 \sqrt{\frac{a_1}{c_1}} +
      a_2 \sqrt{\frac{a_2}{c_2}} - 1
    \biggr).
\end{align}
Moreover, under the weakness constraint,
\begin{equation}
  \frac{\partial}{\partial c_2} \overline{\Delta\mathcal{C}}
= \sqrt{x_\star x_{n_1}} \sqrt{\frac{a_2}{c_2}}
  \left( 1- \frac{1}{3} \frac{x_{n_2}}{x_{n_1}} \frac{a_2}{c_2} \right) > 0.
\end{equation}
Thus, for fixed $a_1$ (thus also $a_2$) and $c_1$, the value of $\overline{\Delta\mathcal{C}}$ is maximized when $c_2=1-c_1$.  Substituting this into Eq.~\eqref{Eq:Function change} and evaluating around the point $(a_1, c_1) = (1/2, 1/2)$, we see that to second order in $(a_1-1/2)$ and $(c_1-1/2)$:
\begin{equation}
  \overline{\Delta\mathcal{C}} \approx \sqrt{x_\star x_{n_1}}
  \left( \frac{x_{n_2}}{x_{n_1}} - 1 \right) (a_1-c_1)^2 \leq 0,
\end{equation}
with equality obtained again only by the trivial measurement.  Thus, $\overline{\Delta\mathcal{C}}$ is non-positive in some neighborhood of $(a_1,c_1) = (1/2,1/2)$.

\Case{iii}
Finally, consider when $x_{n_1} = x_{n_2}$ and either party $n_1$ or $n_2$ measures. Parties $n_1$ and $n_2$ are symmetric prior to measurement, but their ordering can possibly change across the two post-measurement states.  Specifically,
\begin{equation}
  p_\lambda \mathcal{C}(\vec{x}_\lambda) =
  \begin{cases}
    2 \sqrt{x_\star x_{n_1}} (a_\lambda + \frac{1}{3} c_\lambda) & \text{if $a_\lambda \geq c_\lambda$}, \\
    2 \sqrt{x_\star x_{n_1}} \bigl(\sqrt{a_\lambda c_\lambda} + \frac{1}{3} a_\lambda^{3/2} c_\lambda^{-1/2} \bigr) & \text{if $c_\lambda > a_\lambda$}.
  \end{cases}
\end{equation}
In both intervals, $p_\lambda \mathcal{C}(\vec{x}_\lambda)$ is an increasing function of $c_\lambda$.  Therefore, like before, the total average change in $\mathcal{C}$ is maximized when $c_1$ and $c_2$ are both large, i.e., $c_1 + c_2 = 1$.  Thus, taking $a_1$ such that $a_1\geq c_1$ (otherwise take $a_2=1-a_1$ such that $a_2\geq c_2$), and noting that the function $\mathcal{C}$ from Eq.~\eqref{Eq:Concfunc} reduces to $\frac{8}{3} \sqrt{x_\star x_{n_1}}$ prior to measurement, we get
\begin{align}
  \overline{\Delta\mathcal{C}}
  &= 2 \sqrt{x_\star x_{n_1}}
     \left[ \Bigl( a_1 + \frac{1}{3} c_1 \Bigr) +
            \Bigl( \sqrt{(1-a_1)(1-c_1)} + \frac{1}{3} (1-a_1)^{3/2} (1-c_1)^{-1/2} \Bigr)
          - \frac{4}{3}
     \right] \notag\\
&\approx -\frac{1}{3} \sqrt{x_\star x_{n_1}}(a_1-c_1)^3\leq 0.
\end{align}
Equality holds only with a trivial measurement.
\end{proof}

\bibliographystyle{alphaurl}
\bibliography{QuantumBib}

\newcommand{\etalchar}[1]{$^{#1}$}
\begin{thebibliography}{BDM{\etalchar{+}}99}

\bibitem[ABLS01]{Acin-2001b}
Antonio Ac\'\i{}n, Dagmar Bru\ss{}, Maciej Lewenstein, and Anna Sanpera.
\newblock Classification of mixed three-qubit states.
\newblock {\em Phys. Rev. Lett.}, 87(4):040401, Jul 2001.
\newblock \href {http://arxiv.org/abs/quant-ph/0103025}
  {\path{arXiv:quant-ph/0103025}}, \href
  {http://dx.doi.org/10.1103/PhysRevLett.87.040401}
  {\path{doi:10.1103/PhysRevLett.87.040401}}.

\bibitem[AO08]{Anderson-2008a}
Erika Anderson and Daniel K.~L. Oi.
\newblock Binary search trees for generalized measurements.
\newblock {\em Phys. Rev. A}, 77(5):052104, May 2008.
\newblock \href {http://arxiv.org/abs/0712.2665} {\path{arXiv:0712.2665}},
  \href {http://dx.doi.org/10.1103/PhysRevA.77.052104}
  {\path{doi:10.1103/PhysRevA.77.052104}}.

\bibitem[BBC{\etalchar{+}}93]{Bennett-1993a}
Charles~H. Bennett, Gilles Brassard, Claude Cr\'epeau, Richard Jozsa, Asher
  Peres, and William~K. Wootters.
\newblock Teleporting an unknown quantum state via dual classical and
  {E}instein-{P}odolsky-{R}osen channels.
\newblock {\em Phys. Rev. Lett.}, 70(13):1895--1899, Mar 1993.
\newblock \href {http://dx.doi.org/10.1103/PhysRevLett.70.1895}
  {\path{doi:10.1103/PhysRevLett.70.1895}}.

\bibitem[BBPS96]{Bennett-1996b}
Charles~H. Bennett, Herbert Bernstein, Sandu Popescu, and Benjamin Schumacher.
\newblock Concentrating partial entanglement by local operations.
\newblock {\em Phys. Rev. A}, 53(4):2046--2052, Apr 1996.
\newblock \href {http://arxiv.org/abs/quant-ph/9511030}
  {\path{arXiv:quant-ph/9511030}}, \href
  {http://dx.doi.org/10.1103/PhysRevA.53.2046}
  {\path{doi:10.1103/PhysRevA.53.2046}}.

\bibitem[BCH{\etalchar{+}}02]{Bruss-2002a}
Dagmar Bru\ss{}, J.~Ignacio Cirac, Pawe\l{} Horodecki, Florian Hulpke, Barbara
  Kraus, Maciej Lewenstein, and Anna Sanpera.
\newblock Reflections upon separability and distillability.
\newblock {\em J. Mod. Optics}, 49(8):1399--1418, 2002.
\newblock \href {http://arxiv.org/abs/quant-ph/0110081}
  {\path{arXiv:quant-ph/0110081}}, \href
  {http://dx.doi.org/10.1080/09500340110105975}
  {\path{doi:10.1080/09500340110105975}}.

\bibitem[BDF{\etalchar{+}}99]{Bennett-1999a}
Charles~H. Bennett, David~P. DiVincenzo, Christopher~A. Fuchs, Tal Mor, Eric
  Rains, Peter~W. Shor, John~A. Smolin, and William~K. Wootters.
\newblock Quantum nonlocality without entanglement.
\newblock {\em Phys. Rev. A}, 59(2):1070--1091, Feb 1999.
\newblock \href {http://arxiv.org/abs/quant-ph/9804053}
  {\path{arXiv:quant-ph/9804053}}, \href
  {http://dx.doi.org/10.1103/PhysRevA.59.1070}
  {\path{doi:10.1103/PhysRevA.59.1070}}.

\bibitem[BDM{\etalchar{+}}99]{Bennett-1999b}
Charles~H. Bennett, David~P. DiVincenzo, Tal Mor, Peter~W. Shor, John~A.
  Smolin, and Barbara~M. Terhal.
\newblock Unextendible product bases and bound entanglement.
\newblock {\em Phys. Rev. Lett.}, 82(26):5385--5388, Jun 1999.
\newblock \href {http://arxiv.org/abs/quant-ph/9808030}
  {\path{arXiv:quant-ph/9808030}}, \href
  {http://dx.doi.org/10.1103/PhysRevLett.82.5385}
  {\path{doi:10.1103/PhysRevLett.82.5385}}.

\bibitem[BDSW96]{Bennett-1996a}
Charles~H. Bennett, David~P. DiVincenzo, John~A. Smolin, and William~K.
  Wootters.
\newblock Mixed-state entanglement and quantum error correction.
\newblock {\em Phys. Rev. A}, 54(5):3824--3851, Nov 1996.
\newblock \href {http://arxiv.org/abs/quant-ph/9604024}
  {\path{arXiv:quant-ph/9604024}}, \href
  {http://dx.doi.org/10.1103/PhysRevA.54.3824}
  {\path{doi:10.1103/PhysRevA.54.3824}}.

\bibitem[BPR{\etalchar{+}}00]{Bennett-1999c}
Charles~H. Bennett, Sandu Popescu, Daniel Rohrlich, John~A. Smolin, and
  Ashish~V. Thapliyal.
\newblock Exact and asymptotic measures of multipartite pure state
  entanglement.
\newblock {\em Phys. Rev. A}, 63(1):012307, Dec 2000.
\newblock \href {http://arxiv.org/abs/quant-ph/9908073}
  {\path{arXiv:quant-ph/9908073}}, \href
  {http://dx.doi.org/10.1103/PhysRevA.63.012307}
  {\path{doi:10.1103/PhysRevA.63.012307}}.

\bibitem[CCL11]{Cui-2011a}
Wei Cui, Eric Chitambar, and Hoi-Kwong Lo.
\newblock Randomly distilling {$W$}-class states into general configurations of
  two-party entanglement.
\newblock {\em Phys. Rev. A}, 84(5):052301, Nov 2011.
\newblock \href {http://arxiv.org/abs/1106.1209} {\path{arXiv:1106.1209}},
  \href {http://dx.doi.org/10.1103/PhysRevA.84.052301}
  {\path{doi:10.1103/PhysRevA.84.052301}}.

\bibitem[CCL12]{Chitambar-2012a}
Eric Chitambar, Wei Cui, and Hoi-Kwong Lo.
\newblock Increasing entanglement monotones by separable operations.
\newblock {\em Phys. Rev. Lett.}, 108(24):240504, Jun 2012.
\newblock \href {http://dx.doi.org/10.1103/PhysRevLett.108.240504}
  {\path{doi:10.1103/PhysRevLett.108.240504}}.

\bibitem[CDKL01]{Cirac-2001a}
J.~Ignacio Cirac, Wolfgang D\"ur, Barbara Kraus, and Maciej Lewenstein.
\newblock Entangling operations and their implementation using a small amount
  of entanglement.
\newblock {\em Phys. Rev. Lett.}, 86(3):544--547, Jan 2001.
\newblock \href {http://arxiv.org/abs/quant-ph/0007057}
  {\path{arXiv:quant-ph/0007057}}, \href
  {http://dx.doi.org/10.1103/PhysRevLett.86.544}
  {\path{doi:10.1103/PhysRevLett.86.544}}.

\bibitem[CDS08]{Chitambar-2008a}
Eric Chitambar, Runyao Duan, and Yaoyun Shi.
\newblock Tripartite entanglement transformations and tensor rank.
\newblock {\em Phys. Rev. Lett.}, 101(14):140502, Oct 2008.
\newblock \href {http://arxiv.org/abs/0805.2977} {\path{arXiv:0805.2977}},
  \href {http://dx.doi.org/10.1103/PhysRevLett.101.140502}
  {\path{doi:10.1103/PhysRevLett.101.140502}}.

\bibitem[Che04]{Chefles-2004a}
Anthony Chefles.
\newblock Condition for unambiguous state discrimination using local operations
  and classical communication.
\newblock {\em Phys. Rev. A}, 69(5):050307, May 2004.
\newblock \href {http://dx.doi.org/10.1103/PhysRevA.69.050307}
  {\path{doi:10.1103/PhysRevA.69.050307}}.

\bibitem[Chi11]{Chitambar-2011a}
Eric Chitambar.
\newblock Local quantum transformations requiring infinite rounds of classical
  communication.
\newblock {\em Phys. Rev. Lett.}, 107(19):190502, Nov 2011.
\newblock \href {http://arxiv.org/abs/1105.3451} {\path{arXiv:1105.3451}},
  \href {http://dx.doi.org/10.1103/PhysRevLett.107.190502}
  {\path{doi:10.1103/PhysRevLett.107.190502}}.

\bibitem[Cho75]{Choi-1975a}
Man-Duen Choi.
\newblock Completely positive linear maps on complex matrices.
\newblock {\em Liner Alg. Appl.}, 10(3):285--290, Jun 1975.
\newblock \href {http://dx.doi.org/10.1016/0024-3795(75)90075-0}
  {\path{doi:10.1016/0024-3795(75)90075-0}}.

\bibitem[CLMO12]{Childs-2012a}
Andrew~M. Childs, Debbie Leung, Laura Man\v{c}inska, and Maris Ozols.
\newblock A framework for bounding nonlocality of state discrimination, 2012.
\newblock \href {http://arxiv.org/abs/1206.5822} {\path{arXiv:1206.5822}}.

\bibitem[Coh07]{Cohen-2007a}
Scott~M. Cohen.
\newblock Local distinguishability with preservation of entanglement.
\newblock {\em Phys. Rev. A}, 75(5):052313, May 2007.
\newblock \href {http://arxiv.org/abs/quant-ph/0602026}
  {\path{arXiv:quant-ph/0602026}}, \href
  {http://dx.doi.org/10.1103/PhysRevA.75.052313}
  {\path{doi:10.1103/PhysRevA.75.052313}}.

\bibitem[DFXY09]{Duan-2007a}
Runyao Duan, Yuan Feng, Yu~Xin, and Mingsheng Ying.
\newblock Distinguishability of quantum states by separable operations.
\newblock {\em IEEE Trans. Inf. Theory}, 55(3):1320--1330, Mar 2009.
\newblock \href {http://arxiv.org/abs/0705.0795} {\path{arXiv:0705.0795}},
  \href {http://dx.doi.org/10.1109/TIT.2008.2011524}
  {\path{doi:10.1109/TIT.2008.2011524}}.

\bibitem[DHR02]{Donald-2002a}
Matthew~J. Donald, Micha\l{} Horodecki, and Oliver Rudolph.
\newblock The uniqueness theorem for entanglement measures.
\newblock {\em J. Math. Phys.}, 43:4252--4272, 2002.
\newblock \href {http://arxiv.org/abs/quant-ph/0105017}
  {\path{arXiv:quant-ph/0105017}}, \href {http://dx.doi.org/10.1063/1.1495917}
  {\path{doi:10.1063/1.1495917}}.

\bibitem[DL70]{Davies-1970a}
E.~Brian Davies and John~T. Lewis.
\newblock An operational approach to quantum probability.
\newblock {\em Comm. Math. Phys.}, 17(3):239--260, 1970.
\newblock URL: \url{http://projecteuclid.org/euclid.cmp/1103842336}, \href
  {http://dx.doi.org/10.1007/BF01647093} {\path{doi:10.1007/BF01647093}}.

\bibitem[DLT02]{DiVincenzo-2002a}
David~P. DiVincenzo, Debbie~W. Leung, and Barbara~M. Terhal.
\newblock Quantum data hiding.
\newblock {\em IEEE Trans. Inf. Theory}, 48(3):580--598, Mar 2002.
\newblock \href {http://arxiv.org/abs/quant-ph/0103098}
  {\path{arXiv:quant-ph/0103098}}, \href {http://dx.doi.org/10.1109/18.985948}
  {\path{doi:10.1109/18.985948}}.

\bibitem[DVC00]{Dur-2000a}
Wolfgang D\"ur, Guifre Vidal, and J.~Ignacio Cirac.
\newblock Three qubits can be entangled in two inequivalent ways.
\newblock {\em Phys. Rev. A}, 62(6):062314, Nov 2000.
\newblock \href {http://arxiv.org/abs/quant-ph/0005115}
  {\path{arXiv:quant-ph/0005115}}, \href
  {http://dx.doi.org/10.1103/PhysRevA.62.062314}
  {\path{doi:10.1103/PhysRevA.62.062314}}.

\bibitem[EW02]{Eggeling-2002a}
Tilo Eggeling and Reinhard~F. Werner.
\newblock Hiding classical data in multipartite quantum states.
\newblock {\em Phys. Rev. Lett.}, 89(9):097905, Aug 2002.
\newblock \href {http://arxiv.org/abs/quant-ph/0203004}
  {\path{arXiv:quant-ph/0203004}}, \href
  {http://dx.doi.org/10.1103/PhysRevLett.89.097905}
  {\path{doi:10.1103/PhysRevLett.89.097905}}.

\bibitem[FL07]{Fortescue-2007a}
Ben Fortescue and Hoi-Kwong Lo.
\newblock Random bipartite entanglement from {$W$} and {$W$}-like states.
\newblock {\em Phys. Rev. Lett.}, 98(26):260501, Jun 2007.
\newblock \href {http://arxiv.org/abs/quant-ph/0607126}
  {\path{arXiv:quant-ph/0607126}}, \href
  {http://dx.doi.org/10.1103/PhysRevLett.98.260501}
  {\path{doi:10.1103/PhysRevLett.98.260501}}.

\bibitem[GB03]{Barnum03}
Leonid Gurvits and Howard Barnum.
\newblock Separable balls around the maximally mixed multipartite quantum
  states.
\newblock {\em Phys. Rev. A}, 68(4):042312, Oct 2003.
\newblock \href {http://arxiv.org/abs/quant-ph/0302102}
  {\path{arXiv:quant-ph/0302102}}, \href
  {http://dx.doi.org/10.1103/PhysRevA.68.042312}
  {\path{doi:10.1103/PhysRevA.68.042312}}.

\bibitem[GL03]{Gottesman-2003a}
Daniel Gottesman and Hoi-Kwong Lo.
\newblock Proof of security of quantum key distribution with two-way classical
  communications.
\newblock {\em IEEE Trans. Inf. Theory}, 49(2):457--475, Feb 2003.
\newblock \href {http://arxiv.org/abs/quant-ph/0105121}
  {\path{arXiv:quant-ph/0105121}}, \href
  {http://dx.doi.org/10.1109/TIT.2002.807289}
  {\path{doi:10.1109/TIT.2002.807289}}.

\bibitem[HHH96]{Horodecki-1996a}
Micha\l{} Horodecki, Pawe\l{} Horodecki, and Ryszard Horodecki.
\newblock Separability of mixed states: necessary and sufficient conditions.
\newblock {\em Phys. Lett. A}, 223(1--2):1--8, 1996.
\newblock \href {http://arxiv.org/abs/quant-ph/9605038}
  {\path{arXiv:quant-ph/9605038}}, \href
  {http://dx.doi.org/10.1016/S0375-9601(96)00706-2}
  {\path{doi:10.1016/S0375-9601(96)00706-2}}.

\bibitem[HHH98]{Horodecki-1998a}
Micha\l{} Horodecki, Pawe\l{} Horodecki, and Ryszard Horodecki.
\newblock Mixed-state entanglement and distillation: is there a ``bound''
  entanglement in nature?
\newblock {\em Phys. Rev. Lett.}, 80(24):5239--5242, Jun 1998.
\newblock \href {http://arxiv.org/abs/quant-ph/9801069}
  {\path{arXiv:quant-ph/9801069}}, \href
  {http://dx.doi.org/10.1103/PhysRevLett.80.5239}
  {\path{doi:10.1103/PhysRevLett.80.5239}}.

\bibitem[HHH00]{Horodecki-2000a}
Micha\l{} Horodecki, Pawe\l{} Horodecki, and Ryszard Horodecki.
\newblock Limits for entanglement measures.
\newblock {\em Phys. Rev. Lett.}, 84(9):2014--2017, Feb 2000.
\newblock \href {http://arxiv.org/abs/quant-ph/9908065}
  {\path{arXiv:quant-ph/9908065}}, \href
  {http://dx.doi.org/10.1103/PhysRevLett.84.2014}
  {\path{doi:10.1103/PhysRevLett.84.2014}}.

\bibitem[Hor97]{Horodecki-1997a}
Pawe\l{} Horodecki.
\newblock Separability criterion and inseparable mixed states with positive
  partial transposition.
\newblock {\em Phys. Lett. A}, 232(5):333--339, 1997.
\newblock \href {http://arxiv.org/abs/quant-ph/9703004}
  {\path{arXiv:quant-ph/9703004}}, \href
  {http://dx.doi.org/10.1016/S0375-9601(97)00416-7}
  {\path{doi:10.1016/S0375-9601(97)00416-7}}.

\bibitem[Hor01]{Horodecki-2001a}
Micha\l{} Horodecki.
\newblock Entanglement measures.
\newblock {\em Quant. Inf. Comput.}, 1(1):3--26, Jun 2001.
\newblock URL: \url{http://www.rintonpress.com/journals/qic-1-1/miarprz3.pdf}.

\bibitem[Jam72]{Jamiolkowski-1972a}
Andrzej Jamio\l{}kowski.
\newblock Linear transformations which preserve trace and positive
  semidefiniteness of operators.
\newblock {\em Rep. Math. Phys.}, 3(4):275--278, Dec 1972.
\newblock \href {http://dx.doi.org/10.1016/0034-4877(72)90011-0}
  {\path{doi:10.1016/0034-4877(72)90011-0}}.

\bibitem[KKB11]{Kleinmann-2011a}
Matthias Kleinmann, Hermann Kampermann, and Dagmar Bru\ss{}.
\newblock Asymptotically perfect discrimination in the
  local-operation-and-classical-communication paradigm.
\newblock {\em Phys. Rev. A}, 84(4):042326, Oct 2011.
\newblock \href {http://arxiv.org/abs/1105.5132} {\path{arXiv:1105.5132}},
  \href {http://dx.doi.org/10.1103/PhysRevA.84.042326}
  {\path{doi:10.1103/PhysRevA.84.042326}}.

\bibitem[KSV02]{Kitaev-2002a}
Alexei~Yu. Kitaev, Alexander Shen, and Mikhail~N. Vyalyi.
\newblock {\em Classical and Quantum Computation}, volume~47 of {\em Graduate
  Studies in Mathematics}.
\newblock American Mathematical Society, 2002.
\newblock URL: \url{http://books.google.com/books?id=qYHTvHPvmG8C}.

\bibitem[KT10]{Kintas-2010a}
Seckin K\i{}nta\c{s} and Sadi Turgut.
\newblock Transformations of {$W$}-type entangled states.
\newblock {\em J. Math. Phys.}, 51:092202, 2010.
\newblock \href {http://arxiv.org/abs/1003.2118} {\path{arXiv:1003.2118}},
  \href {http://dx.doi.org/10.1063/1.3481573} {\path{doi:10.1063/1.3481573}}.

\bibitem[KTYI07]{Koashi-2007a}
Masato Koashi, Fumitaka Takenaga, Takashi Yamamoto, and Nobuyuki Imoto.
\newblock Quantum nonlocality without entanglement in a pair of qubits, 2007.
\newblock \href {http://arxiv.org/abs/0709.3196} {\path{arXiv:0709.3196}}.

\bibitem[LP01]{Lo-1997a}
Hoi-Kwong Lo and Sandu Popescu.
\newblock Concentrating entanglement by local actions: {B}eyond mean values.
\newblock {\em Phys. Rev. A}, 63(2):022301, Jan 2001.
\newblock \href {http://arxiv.org/abs/quant-ph/9707038}
  {\path{arXiv:quant-ph/9707038}}, \href
  {http://dx.doi.org/10.1103/PhysRevA.63.022301}
  {\path{doi:10.1103/PhysRevA.63.022301}}.

\bibitem[LVvE03]{Laustsen-2003a}
Thomas Laustsen, Frank Verstraete, and Steven~J. van Enk.
\newblock Local vs. joint measurements for the entanglement of assistance.
\newblock {\em Quant. Inf. Comput.}, 3(1):64--83, Jan 2003.
\newblock URL: \url{http://www.rintonpress.com/xqic3/qic-3-1/064-083.pdf},
  \href {http://arxiv.org/abs/quant-ph/0206192}
  {\path{arXiv:quant-ph/0206192}}.

\bibitem[MP95]{Massar-1995a}
Serge Massar and Sandu Popescu.
\newblock Optimal extraction of information from finite quantum ensembles.
\newblock {\em Phys. Rev. Lett.}, 74(8):1259--1263, Feb 1995.
\newblock \href {http://dx.doi.org/10.1103/PhysRevLett.74.1259}
  {\path{doi:10.1103/PhysRevLett.74.1259}}.

\bibitem[Nie99]{Nielsen-1999a}
Michael~A. Nielsen.
\newblock Conditions for a class of entanglement transformations.
\newblock {\em Phys. Rev. Lett.}, 83(2):436--439, Jul 1999.
\newblock \href {http://arxiv.org/abs/quant-ph/9811053}
  {\path{arXiv:quant-ph/9811053}}, \href
  {http://dx.doi.org/10.1103/PhysRevLett.83.436}
  {\path{doi:10.1103/PhysRevLett.83.436}}.

\bibitem[OB05]{Oreshkov-2005a}
Ognyan Oreshkov and Todd~A. Brun.
\newblock Weak measurements are universal.
\newblock {\em Phys. Rev. Lett.}, 95(11):110409, Sep 2005.
\newblock \href {http://arxiv.org/abs/quant-ph/0503017}
  {\path{arXiv:quant-ph/0503017}}, \href
  {http://dx.doi.org/10.1103/PhysRevLett.95.110409}
  {\path{doi:10.1103/PhysRevLett.95.110409}}.

\bibitem[OH08]{Owari-2008a}
Masaki Owari and Masahito Hayashi.
\newblock Two-way classical communication remarkably improves local
  distinguishability.
\newblock {\em New J. Phys.}, 10(1):013006, 2008.
\newblock \href {http://arxiv.org/abs/0708.3154} {\path{arXiv:0708.3154}},
  \href {http://dx.doi.org/10.1088/1367-2630/10/1/013006}
  {\path{doi:10.1088/1367-2630/10/1/013006}}.

\bibitem[PV07]{Plenio-2007a}
Martin~B. Plenio and Shashank Virmani.
\newblock An introduction to entanglement measures.
\newblock {\em Quant. Inf. Comput.}, 7(1\&2):1--51, Jan 2007.
\newblock URL: \url{http://www.rintonpress.com/xqic7/qic-7-12/001-051.pdf},
  \href {http://arxiv.org/abs/quant-ph/0504163}
  {\path{arXiv:quant-ph/0504163}}.

\bibitem[PW91]{Peres-1991a}
Asher Peres and William~K. Wootters.
\newblock Optimal detection of quantum information.
\newblock {\em Phys. Rev. Lett.}, 66(9):1119--1122, Mar 1991.
\newblock \href {http://dx.doi.org/10.1103/PhysRevLett.66.1119}
  {\path{doi:10.1103/PhysRevLett.66.1119}}.

\bibitem[Rai97]{Rains-1997a}
Eric~M. Rains.
\newblock Entanglement purification via separable superoperators, 1997.
\newblock \href {http://arxiv.org/abs/quant-ph/9707002}
  {\path{arXiv:quant-ph/9707002}}.

\bibitem[Rai01]{Rains-2001a}
Eric~M. Rains.
\newblock A semidefinite program for distillable entanglement.
\newblock {\em IEEE Trans. Inf. Theory}, 47(7):2921--2933, Nov 2001.
\newblock \href {http://arxiv.org/abs/quant-ph/0008047}
  {\path{arXiv:quant-ph/0008047}}, \href {http://dx.doi.org/10.1109/18.959270}
  {\path{doi:10.1109/18.959270}}.

\bibitem[Rin04]{DeRinaldis-2004a}
Sergio~De Rinaldis.
\newblock Distinguishability of complete and unextendible product bases.
\newblock {\em Phys. Rev. A}, 70(2):022309, Aug 2004.
\newblock \href {http://arxiv.org/abs/quant-ph/0304027}
  {\path{arXiv:quant-ph/0304027}}, \href
  {http://dx.doi.org/10.1103/PhysRevA.70.022309}
  {\path{doi:10.1103/PhysRevA.70.022309}}.

\bibitem[Roc96]{Rockafellar-1996a}
R.~Tyrell Rockafellar.
\newblock {\em Convex analysis}.
\newblock Princeton Mathematical Series. Princeton University Press, 1996.
\newblock URL: \url{http://books.google.com/books?id=1TiOka9bx3sC}.

\bibitem[TDL01]{Terhal-2001a}
Barbara~M. Terhal, David~P. DiVincenzo, and Debbie~W. Leung.
\newblock Hiding bits in {B}ell states.
\newblock {\em Phys. Rev. Lett.}, 86(25):5807--5810, Jun 2001.
\newblock \href {http://arxiv.org/abs/quant-ph/0011042}
  {\path{arXiv:quant-ph/0011042}}, \href
  {http://dx.doi.org/10.1103/PhysRevLett.86.5807}
  {\path{doi:10.1103/PhysRevLett.86.5807}}.

\bibitem[VPRK97]{Vedral-1997a}
Vlatko Vedral, Martin~B. Plenio, Michael~A. Rippin, and Peter~L. Knight.
\newblock Quantifying entanglement.
\newblock {\em Phys. Rev. Lett.}, 78(12):2275--2279, Mar 1997.
\newblock \href {http://arxiv.org/abs/quant-ph/9702027}
  {\path{arXiv:quant-ph/9702027}}, \href
  {http://dx.doi.org/10.1103/PhysRevLett.78.2275}
  {\path{doi:10.1103/PhysRevLett.78.2275}}.

\bibitem[Wat05]{Watrous-2005a}
John Watrous.
\newblock Notes on super-operator norms induced by {S}chatten norms.
\newblock {\em Quant. Inf. Comput.}, 5(1):58--68, Jan 2005.
\newblock URL: \url{http://www.rintonpress.com/xqic5/qic-5-1/058-068.pdf},
  \href {http://arxiv.org/abs/quant-ph/0411077}
  {\path{arXiv:quant-ph/0411077}}.

\bibitem[Wer89]{Werner-1989a}
Reinhard~F. Werner.
\newblock Quantum states with {E}instein-{P}odolsky-{R}osen correlations
  admitting a hidden-variable model.
\newblock {\em Phys. Rev. A}, 40(8):4277--4281, Oct 1989.
\newblock \href {http://dx.doi.org/10.1103/PhysRevA.40.4277}
  {\path{doi:10.1103/PhysRevA.40.4277}}.

\bibitem[Woo98]{Wootters-1998a}
William~K. Wootters.
\newblock Entanglement of formation of an arbitrary state of two qubits.
\newblock {\em Phys. Rev. Lett.}, 80(10):2245--2248, Mar 1998.
\newblock \href {http://arxiv.org/abs/quant-ph/9709029}
  {\path{arXiv:quant-ph/9709029}}, \href
  {http://dx.doi.org/10.1103/PhysRevLett.80.2245}
  {\path{doi:10.1103/PhysRevLett.80.2245}}.

\bibitem[XD08]{Xin-2008a}
Yu~Xin and Runyao Duan.
\newblock Local distinguishability of orthogonal $2 \otimes 3$ pure states.
\newblock {\em Phys. Rev. A}, 77(1):012315, Jan 2008.
\newblock \href {http://arxiv.org/abs/0709.1651} {\path{arXiv:0709.1651}},
  \href {http://dx.doi.org/10.1103/PhysRevA.77.012315}
  {\path{doi:10.1103/PhysRevA.77.012315}}.

\end{thebibliography}

\end{document}